\documentclass[a4paper,american,cleveref, autoref, thm-restate]{lipics-v2021}
\nolinenumbers 
\pdfoutput=1 
\hideLIPIcs  

\usepackage{amsmath,amssymb,amsfonts,amsthm}
\usepackage{mathtools}
\usepackage{dsfont}
\usepackage{xcolor}
\usepackage[textsize=footnotesize]{todonotes}
\usepackage{enumerate}
\usepackage[normalem]{ulem}
\usepackage{booktabs}

\newcommand{\R}{\mathds{R}}

\newcommand{\N}{\mathds{N}}

\newcommand{\Rst}{R_{s,t}^y}

\newcommand{\abs}[1]{\left\lvert{#1}\right\rvert}
\newcommand{\card}[1]{\left\lvert{#1}\right\rvert}

\newcommand{\floor}[1]{\left\lfloor{#1}\right\rfloor}

\newcommand{\define}{\coloneqq}
\newcommand{\set}[1]{\left\{#1\right\}}
\newcommand{\suchthat}{\,:\,}

\DeclareMathOperator{\sign}{sign}

\DeclareMathOperator*{\argmin}{argmin}

\DeclareMathOperator{\supp}{supp}

\title{Approximating the Network Design Problem for Potential-Based Flows} 

\titlerunning{Network Design for Potential-Based Flows}

\author{Max Klimm}{TU Berlin, Germany}{klimm@math.tu-berlin.de}{https://orcid.org/0000-0002-9061-2267}{}{}
\author{Marc E. Pfetsch}{TU Darmstadt, Germany}{pfetsch@mathematik.tu-darmstadt.de}{https://orcid.org/0000-0002-0947-7193}{}{}
\author{Martin Skutella}{TU Berlin, Germany}{skutella@math.tu-berlin.de}{https://orcid.org/0000-0002-9814-1703}{}{}
\author{Lea Strubberg}{TU Berlin, Germany}{strubberg@math.tu-berlin.de}{https://orcid.org/0009-0009-8505-3614}{}{}

\authorrunning{M.~Klimm, M.~Pfetsch, M.~Skutella, L.~Strubberg}
\Copyright{Max~Klimm, Marc~E.~Pfetsch, Martin~Skutella, Lea~Strubberg} 
\ccsdesc[500]{Mathematics of computing~Network flows}
\ccsdesc[500]{Mathematics of computing~Approximation algorithms}

\keywords{network design, approximation algorithms, potential-based flows} 

\relatedversion{} 

\funding{This work is funded by the DFG via the SFB TRR 154 subprojects A01 and A07.}

\begin{document}

\maketitle

\begin{abstract}
We develop efficient algorithms for a fundamental network design problem arising in potential-based flow models, which are central to many energy transport networks (e.g., hydrogen and electricity). In contrast to classical network flow problems, the nonlinearities inherent in potential-based networks introduce significant new challenges. We address these challenges through intricate reductions to classical combinatorial optimization problems, such as (constrained) shortest path problems, enabling the application of well-established algorithmic techniques to compute exact and approximate solutions efficiently. Finally, we complement these algorithmic results with matching complexity results concerning the hardness and non-approximability of the considered problem variants.
\end{abstract}

\section{Introduction}

Potential-based flow networks provide a versatile mathematical model for transport processes in infrastructure systems.
Formally, such a network is represented by a directed weakly connected graph $G=(V,A)$.
A flow is a vector $f\in\R^A$, where~$f_a>0$ denotes flow in the direction of arc~$a$ and~$f_a<0$ denotes flow in the opposite direction.
In contrast to classical flow models, feasible flows not only satisfy flow conservation, but are also induced by a node potential vector~$\pi\in\R^V$.
Specifically, flows and potentials are coupled through a \emph{potential-flow equation}
\begin{equation}
\label{eq:weymouth_con}
f_a = \mu_a \sign(\Delta \pi_a)\,\!\abs{\Delta \pi_a}^{1/r}
\qquad \text{where } \Delta\pi_a \define \pi_v-\pi_w \text{ for all } a=(v,w)\in A.
\end{equation}

Here, $\mu_a$ denotes the conductance of arc $a$.
Depending on the value of~$r > 0$, this framework captures several important application domains, including electrical, water, and stationary gas networks.
For $r=1$, it specializes to electrical networks: node potentials represent voltages, arc flows represent currents, and arc conductances are the reciprocals of Ohmic resistances.
For $r=2$, we recover a standard model of stationary gas transport in which node potentials represent squared pressures, arc flows represent volumetric gas flows, and conductances depend on physical pipe parameters such as roughness, length, and diameter.
Hence, potential-based flow networks are both practically relevant and mathematically challenging, since the discrete network structure is tightly coupled with nonlinear potential constraints.

In this paper, we study the design of minimum-cost potential-based flow networks that can route a prescribed amount of flow.
Such problems are particularly timely in view of the ongoing energy transition, which requires substantial investments in new infrastructure, for example in electricity and hydrogen networks.
As these projects require substantial capital expenditures, it is crucial to use mathematical optimization tools in their planning and design.

The input to this optimization problem consists of a directed graph~$G=(V,A)$, where each arc represents a candidate connection that may be installed.
Two nodes $s$, $t\in V$ are designated as source and sink, and the network must support an $s$--$t$ flow of value~$1$ using potentials in the interval~$[0,B]$ for a given bound~$B$.
Such bounds arise naturally in applications, for instance as safety or operational limits on voltages or pressures.
Our goal is to determine which arcs to install, and with which conductances, so that the resulting network can route one unit of flow at minimum cost within the given potential bounds.
Installing conductance $y_a$ on arc~$a$ incurs a fixed cost $\gamma_a\in\R_{\geq 0}$ and a variable cost $c_a\in\R_{\geq 0}$ per unit of conductance.
Thus, arc $a$ contributes no cost if $y_a=0$, and contributes $\gamma_a+c_a y_a$ if $y_a>0$.
In addition, each arc $a$ has an upper bound $\bar y_a\in\R_{\geq 0} \cup\{\infty\}$ on the conductance that can be installed.
The assumption that the flow is of value~$1$ can be made without loss of generality since~\eqref{eq:weymouth_con} implies that the flows scale linearly with conductances  so that other target flow values can be achieved by scaling the variable costs~$c_a$ appropriately.

For a node~$v$, let $\mathbf{1}_v$ denote the corresponding unit vector in $\R^V$.
Formally, the optimization problem considered in this paper is
\begin{subequations}
\label{eq:problem_intro}
\begin{align}
\inf_{x,y,\pi,f}\quad & \sum_{a \in A} \bigl(c_a y_a + \gamma_a x_a\bigr) \\
\text{s.t.}\quad
& \sum_{a \in \delta^+(v)} f_a - \sum_{a \in \delta^-(v)} f_a
    = \mathbf{1}_{s} - \mathbf{1}_{t}
    && \forall v \in V,  \\
& f_a = y_a \sign(\Delta \pi_a)\,\abs{\Delta \pi_a}^{1/r}
    && \forall a \in A,  \\
& y_a > 0 \;\Rightarrow\; x_a = 1 && \forall a \in A, \label{eq:problem_intro:ybound} \\
& \pi \in [0,B]^V,\quad
  f \in \R^A,\quad
  y_a \in [0,\bar{y}_a]~\forall a\in A,\quad
  x \in \{0,1\}^A, 
\end{align}
\end{subequations}
where $\delta^+(v)$ denotes the set of arcs leaving~$v$, and $\delta^-(v)$ denotes the set of arcs entering~$v$.
We note that, for every arc $a$ with $\bar{y}_a < \infty$, constraint~\eqref{eq:problem_intro:ybound} can be written as $0 \leq y_a \leq \bar{y}_a x_a$.
For all other arcs, the constraint can be formulated as $y_a( 1 - x_a) = 0$.
In any case, the problem can be formulated as a mixed-integer non-linear optimization problem (MINLP).

\subsection{Our Results}
We develop several exact and approximation algorithms for practically relevant variants of the optimization problem~\eqref{eq:problem_intro}. We complement these results with different hardness proofs showing that some simplifying assumptions are needed to obtain positive results. For an overview, see also \Cref{tab:results}.

We first observe that the considered network design problem~\eqref{eq:problem_intro} is $\mathsf{NP}$-hard and even hard to approximate. Our approximation hardness result uses a reduction from Steiner Tree and states that, even for the case without variable costs~($c\equiv0$),  there is no polynomial-time approximation scheme (PTAS), unless $\mathsf{P}=\mathsf{NP}$. In particular, the problem is strongly $\mathsf{NP}$-hard. We also show via reductions from the Partition Problem and the Knapsack Problem that the problem remains (weakly) $\mathsf{NP}$-hard on very simple series-parallel networks, even for the case without conductance bounds~($\bar{y}\equiv\infty$) and also for finite conductance bounds without variable costs~($c\equiv0$);
see Section~\ref{sec:complexity}.

In Section~\ref{sec:path-solutions} we study the important special case without conductance bounds~($\bar{y}\equiv\infty$) and prove that there is always an optimal solution consisting of a single $s$--$t$ path. If the variable costs or the fixed costs are zero for all arcs (i.e.,~$c\equiv0$ or~$\gamma\equiv0$), we show how to efficiently find such an optimal path by reducing the problem in a non-trivial way to a shortest path problem. For general costs, however, the problem is $\mathsf{NP}$-hard and we present a fully polynomial-time approximation scheme (FPTAS) that finds a near-optimal path solution. The latter algorithm is based on ideas related to the constrained shortest path problem. 

In Section~\ref{sec:series-parallel} we turn to the general case with finite conductance bounds ($\bar{y}\in\R_{\geq 0}^A$), which is hard to approximate, unless $\mathsf{P}=\mathsf{NP}$. For the special case of series-parallel networks, we present a fully polynomial-time approximation scheme that is based on a dynamic programming approach.
In contrast to standard approaches for related optimization problems on series-parallel networks (rounding etc.), we need to overcome the additional difficulty of how to choose the continuous variables~$y_a$, $a\in A$ (conductances).

\begin{table}[t]
\centering
\caption{%
Our results for the optimization problem~\eqref{eq:problem_intro} depending on whether conductance bounds are considered (``yes'': $\bar{y} \in (\R_{\geq 0} \cup \{\infty\})^A$; ``no'': $\bar{y} \equiv \infty$), fixed costs are allowed (``yes'': $\gamma \in \R_{\geq 0}^A$; ``no'': $\gamma \equiv 0$), variable costs are allowed (``yes'': $c \in \R_{\geq 0}^A$; ``no'': $c \equiv 0$), and whether arbitrary or series-parallel (SP) graphs are considered.
}
\label{tab:results}
\small
\begin{tabular}{@{}ccccll@{}}
\toprule
cond.~bd.~$\bar{y}$ & fix.~cost~$\gamma$ & var.~cost~$c$ & graph~$G$ & \multicolumn{2}{c}{results} \\
\midrule
yes       &   yes    & yes   & SP & FPTAS [Thm.~\ref{thm:SPFPTAS}] & $\mathsf{NP}$-hard~[Thm.~\ref{thm:NP_hardness}]  \\
yes       & yes  & no   & any & & $\mathsf{APX}$-hard [Thm.~\ref{thm:APXhard}] \\
yes & no  & yes & any & \multicolumn{2}{l}{conv.~opt.~[Cor.~\ref{cor:convex}]} \\
no        & no  & yes    & any   & polytime [Thm.~\ref{thm:poly_time}] \\
no        & yes & no        & any & polytime [Thm.~\ref{thm:poly_time}] \\
no        & yes & yes    & any   & FPTAS [Thm.~\ref{thm:poly_time}] & $\mathsf{NP}$-hard~[Thm.~\ref{thm:NP_hardness}] \\
\bottomrule
\end{tabular}
\end{table}

\subsection{Related Work}
The construction of cost-minimal networks supporting given flow demands is a long-standing research question in the optimization literature.
Most of the discussed models, however, consider the classical flow model where flow can be routed arbitrarily subject to capacity constraints on the arcs.
Chien~\cite{chien1960} considers one such model where for each pair of nodes a demand is given and the goal is to make a minimum investment in arc capacities such that for each pair of nodes the given demand can be routed.
Further improvements and variants of this model are discussed by Gomory and Hu~\cite{gomory1961,gomory1962}, Gusfield~\cite{gusfield1983}, Talluri~\cite{talluri1996}, Chou and Frank~\cite{chou1970}, Sridhar and Chandrasekaran~\cite{sridhar1992}, and Kabadi~\cite{kabadi2008}.

While the works above consider a set of $s$--$t$ flows that need to be realized in the network, Buchheim et al.~\cite{buchheim2011} consider the more general version in which the network needs to be able to support a set of different $b$-vectors and the network capacities are required to be integer.
They show that, for the case of at least three $b$-vectors, this problem is $\mathsf{NP}$-hard and provide a branch-and-cut framework for an arbitrary number of $b$-vectors.
Without the integrality constraints, the problem can be solved with linear programming techniques as shown by Schmidt~\cite{schmidt2014}.
For the more challenging problem where given network capacities need to be augmented to support a multi-commodity flow, Bienstock and Günlük~\cite{bienstock1996} and Günlük~\cite{Gunluk99} propose branch-and-cut frameworks.

Our problem formulation with fixed and variable costs for installing capacity is further reminiscent of the fixed-charge flow problem introduced by Hirsch and Dantzig~\cite{HischDantzig68}.
This problem generalizes the classical min-cost flow problem by allowing each arc to incur, in addition to the usual linear cost, a fixed cost that is charged whenever the flow on the arc is positive.
It contains the Steiner tree problem as a special case and is hence $\mathsf{NP}$-hard.
Valid inequalities and resulting branch-and-cut frameworks for this problem are studied, among others, by Ortega and Wolsey~\cite{ortega03}, Padberg, van Roy, and Wolsey~\cite{padberg85}, and van Roy and Wolsey~\cite{vanroy85}.

A network design for potential-based electrical networks (corresponding to the case $r=1$) has been studied by~Chan et al.~\cite{ChanLauSchildWongZhou2022}.
They considered an unweighted and purely discrete version of the problem where edges can be switched on or off; in the former case a given conductivity $\mu_a$ is used.
The authors ask for the minimal effective resistance that can be realized by switching $k$~edges on and show that this problem is $\mathsf{NP}$-hard.
It is easy to observe that this result immediately implies also that our problem \eqref{eq:problem_intro} is $\mathsf{NP}$-hard as well for the special case when $\gamma \equiv 1$, $c \equiv 0$, and $\bar{y} = \mu$.
Chan et al.~further showed that a weighted version of the problem is hard to approximate within a factor of $2-\epsilon$ under the small set expansion assumption.
Chan et al.~further provide an $8$-approximation for their problem under a technical assumption that allows them to buy up to $4$~copies of each edge while the optimum they compare to cannot buy each edge more than once.
The approximations we provide rely on different techniques and do not require such an assumption.
Mixed-integer nonlinear programming formulations and branch-and-bound frameworks have been proposed for various network design problems for potential-based flows~\cite{BorrazSanchezBentBackhausHijaziHentenryck2016,BragalliDAmbrosioLeeLodiToth12,BarrosAHCPT23,humpola15,HumS17,LiDeySahinidis24,raghunathan13}. Börner et al.~\cite{BornerKLPSS25} proposed a class of valid cuts that can be used in such frameworks.
In contrast to the algorithms we propose in this paper, these approaches are not guaranteed to run in polynomial time.

Bilevel optimization network design problems where the task is to support a traffic equilibrium (rather than a potential-based flow as in our paper) has been studied in the transportation literature \cite{abduul1979,GairingHarksKlimm2017,marcotte1986,marcotte1992}.
In this literature it is standard to assume that edges can be made arbitrarily efficient given sufficient investments. This corresponds to the case of no upper bounds on the conductance that we study in~\Cref{sec:path-solutions}.

Our approximation techniques in part solve \emph{constrained shortest path problems} as subproblems.
This problem is $\mathsf{NP}$-hard (Garey and Johnson~\cite{GareyJohnson1979}).
The first fully polynomial-time approximation scheme (FPTAS) for the problem on acyclic graphs was introduced by Hassin~\cite{hassin1992} and is based on dynamic programming, scaling, and rounding techniques.
His ideas were further improved  by Lorenz and Raz~\cite{LorenzRaz2001} who obtain a running time of $O(\card{A} n\,(\log \log n+1/\varepsilon))$ which also applies to general graphs.
Holzmüller~\cite{holzmueller2019} achieved an improved runtime of $O(\abs{A}n/\varepsilon)$ for undirected graphs.

\section{Preliminaries}
For a ground set $A$, a subset $S \subseteq A$, and an element $a \in A$, we denote by $\chi_S(a)$ the indicator function defined as $\chi_S(a) =1$ if $a \in S$, and $\chi_S(a) = 0$, otherwise.

In this section, we introduce the main notation of the paper.
In \Cref{subsec:potential-based-flow}, we first consider potential-based flow networks where each arc~$a$ has a given conductance~$\mu_a$.
In \Cref{subsec:network-design}, we introduce the problem of determining the conductances $y_a$ of each arc in a network in a cost-minimal way to satisfy certain constraints.
To distinguish between the cases where the conductances are given parameters and optimization variables, we denote them by~$\mu$ in the former case and by~$y_a$ in the latter case.

\subsection{Potential-Based Flow Networks}
\label{subsec:potential-based-flow}

Let $G=(V,A)$ be a weakly connected digraph with designated nodes $s\in V$ and~$t\in V$ with~$s\neq t$ which we call the \emph{source} and the \emph{sink} node of $G$, respectively.
The graph~$G$ can be represented by its node-arc incidence matrix~$\Gamma = (\gamma_{v,a})_{v \in V, a \in A}$ defined as~$\gamma_{v,a} = +1$ if  arc~$a$ leaves vertex~$v$, $\gamma_{v,a} = -1$ if arc~$a$ enters vertex~$v$, and $\gamma_{v,a} = 0$, otherwise.
A \emph{potential-based flow network} is a tuple~$(G,\mu)$ of a graph~$G$ and a conductance vector~$\mu \in \R_{>0}^A$. We are further given a parameter $r \geq 1$ which we assume to be fixed.

Given $(G, \mu)$ and $r$, a \emph{potential-based flow} in~$(G,\mu)$ is a tuple of a node-potential vector $\pi \in \R^V$ and an arc-flow vector~$f\in \R^A$ such that the potential-flow equation~\eqref{eq:weymouth_con} holds for every arc.
It is known, see, e.g., Birkhoff and Diaz~\cite{birkhoff1956}, that for every balance vector $b\in \R^V$ with $\sum_{v\in V} b_v=0$ there exists a unique potential-based flow~$(\pi,f)$ such that
\begin{subequations}
\label{potentialFlow}
\begin{align*}
    f_a = \mu_a \sign(\Delta \pi_a){\abs{\Delta \pi_a}}^{1/r}\quad\text{for all } a\in A,\qquad\Gamma f = b,\quad\text{and}\quad
    \pi_t = 0.
\end{align*}%
\end{subequations}%
In this case, we say that $(\pi,f)$ \emph{is induced by} $b$.
%
%
 We call the unique potential-based flow $(\pi,f)$ for $b=\mathbf{1}_s - \mathbf{1}_{t}$ a \emph{unit (potential-based) $s$--$t$ flow}.
The \emph{effective resistance} $R_{s,t}$ between~$s$ and~$t$ is defined as the difference between the potentials at nodes~$s$ and at~$t$ for the unit $s$--$t$ flow, i.e., 
\[R_{s,t}=\pi_s-\pi_t, \quad \text{where } (\pi,f) \text{ is a unit $s$--$t$ flow.} \]
Using that the flow along an arc~$a$ is nonzero only if $\Delta \pi_a \neq 0$ and that the potentials are monotone along a path that carries flow, we immediately obtain that~$R_{s,t}>0$.
We define the \emph{effective conductance} between $s$ and $t$ as $\smash{C_{s,t}= 1/{(R_{s,t})}^{1/r}}$.
Raber~\cite{Raber22} showed that the effective conductance is the value of the unique potential-based flow with potential difference of~$1$ between~$s$ and~$t$ and balance $0$ at every node but~$s$ and~$t$.
The following lemma establishes a simple upper bound on the effective resistance and effective conductance of a network that will be useful in the remainder of the paper.

\begin{restatable}{lemma}{boundsOnEffResAndEffCon} \label{lem:boundsEffResEffCon}
    Let $(G,\mu)$ be a potential-based flow network.
    \begin{enumerate}[(a)]
        \item For the effective resistance, we have
        $R_{s,t}\leq \sum_{a\in A}  1 / \mu_a^r$,
        where equality holds if $G$ is a path.
        \item For the effective conductance, we have
        $C_{s,t} \leq\sum_{a\in A} \mu_a$,
        where equality holds if all arcs in~$G$ are parallel from~$s$ to~$t$.
    \end{enumerate}
\end{restatable}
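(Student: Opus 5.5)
For part (a), the plan is to combine a variational characterization of the unit $s$--$t$ flow with a comparison against a single path. For part (b), I will use Raber's characterization of $C_{s,t}$ together with a maximum principle for potentials.

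\emph{Part (a).} First I will establish that the unit $s$--$t$ flow $f$ minimizes the strictly convex energy
\[
E(g)=\sum_{a\in A}\frac{\abs{g_a}^{r+1}}{(r+1)\mu_a^r}
\]
over all $g$ with $\Gamma g=\mathbf{1}_s-\mathbf{1}_t$. The KKT conditions of this problem say exactly that there is a multiplier vector $\pi$ with
\[
\pi_v-\pi_w=\sign(g_a)\abs{g_a}^r/\mu_a^r \quad\text{for all } a=(v,w)\in A.
\]
Inverting this relation gives \eqref{eq:weymouth_con}. So the minimizer, which exists and is unique by strict convexity, is the unique potential-based flow of Birkhoff and Diaz. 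Next I relate the energy to $R_{s,t}$. Using \eqref{eq:weymouth_con} and $\Gamma f=\mathbf{1}_s-\mathbf{1}_t$,
\[
\sum_{a}\frac{\abs{f_a}^{r+1}}{\mu_a^r}=\sum_a f_a\Delta\pi_a=\pi^\top\Gamma f=\pi_s-\pi_t=R_{s,t},
\]
hence $R_{s,t}=(r+1)E(f)$. Finally I compare with a path. Since $G$ is weakly connected, there is an undirected $s$--$t$ path $P$. Route one unit of flow along $P$, with sign $\pm1$ according to the orientation of each arc. This $g$ is feasible and has
\[
(r+1)E(g)=\sum_{a\in P}1/\mu_a^r\le\sum_{a\in A}1/\mu_a^r.
\]
By minimality, $R_{s,t}=(r+1)E(f)\le(r+1)E(g)$, which gives the bound. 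If $G$ is itself an $s$--$t$ path, then $g$ is the only feasible flow, so equality holds.

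\emph{Part (b).} I will use Raber's characterization: $C_{s,t}$ is the value of the potential-based flow with $\pi_s=1$, $\pi_t=0$ and zero balance at all other nodes.
\begin{itemize}
\item The key step is a maximum principle: all potentials lie in $[0,1]$. Suppose the maximum $M>1$ were attained at a node $v\neq s,t$. Every arc at $v$ would then carry flow out of $v$ or carry zero flow. Zero balance forces all incident flows to be zero, so all neighbors of $v$ also have potential $M$. Propagating along the weakly connected graph eventually reaches $s$ or $t$, which is a contradiction. The minimum is handled symmetrically.
\item Given $\pi\in[0,1]^V$, every arc satisfies $\abs{\Delta\pi_a}\le1$, and hence $\abs{f_a}\le\mu_a$ by \eqref{eq:weymouth_con}.
\item The flow value is the net outflow at $s$, which is at most $\sum_{a\in\delta(s)}\mu_a\le\sum_{a\in A}\mu_a$.
\item If all arcs are parallel from $s$ to $t$, then each arc has $\Delta\pi_a=1$ and $f_a=\mu_a$, which gives equality.
\end{itemize}

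The main obstacle I expect is justifying the variational characterization in (a) cleanly, in particular the identification of the KKT multipliers with potentials. This step is standard convex duality, since the constraints are linear and the objective is differentiable for $r\ge1$. The propagation argument in the maximum principle is routine.
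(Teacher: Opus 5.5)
Your proof is correct. For part (a) it takes a different route from the paper. The paper invokes Raber's lemma that no arc carries more than the flow value, i.e.\ $\abs{f_a}\le 1$; via \eqref{eq:weymouth_con} this gives $\abs{\Delta\pi_a}\le 1/\mu_a^r$, and the paper then telescopes potential differences along an $s$--$t$ path. You instead derive Thomson's principle, which is exactly the characterization \eqref{thomson} that the paper only quotes later from Klimm et al. You then compare the energy of the true flow with that of the unit path flow. Your route is self-contained apart from standard convex duality, and it gives the more general bound $R_{s,t}\le\sum_a \abs{g_a}^{r+1}/\mu_a^r$ for every feasible unit flow $g$. The paper's route is shorter but rests on the cited lemma. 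One small precision: existence of the minimizer comes from coercivity of $E$, not from strict convexity. You can also skip existence altogether, because the Birkhoff--Diaz flow already satisfies the KKT conditions with multiplier $\pi$, and KKT is sufficient for this convex, linearly constrained problem.

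For part (b) your argument is essentially the paper's. The paper works with the unit flow, cites $0\le\pi_v\le\pi_s$ to get $\abs{\Delta\pi_a}\le R_{s,t}$, and bounds the flow across a cut. You rescale to potential difference $1$ via Raber's characterization, prove the maximum principle yourself, and use the cut around $s$. This even yields the slightly sharper bound $C_{s,t}\le\sum \mu_a$ over the arcs incident to $s$.
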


\begin{proof}
    Let $(\pi,f)$ be the unit potential-based flow from $s$ to $t$ in $G$ and let $P$ be an $s$--$t$ path in $G$. 
    We assume without loss of generality that the arcs are directed in the direction from $s$ to $t$ so that the flow $f_a$ and the potential difference $\Delta \pi_a$ of an arc $a\in P$ are non negative.
    Raber~\cite[Lemma~3.10]{Raber22} showed the intuitive result that no arc carries more flow than the flow value, i.e., that~$f_a\leq1$ for every arc~$a\in A$. Thus by the potential-flow equation~\eqref{eq:weymouth_con} it is
    \[ \mu_a {(\Delta \pi_a)}^{1/r} \leq 1 \quad \Leftrightarrow \quad \Delta \pi_a \leq  \frac{1}{\mu_a^r},\]
    where equality holds if and only if~$f_a=1$.
    Since $P$ is an $s$--$t$ path, the sum of the potential differences over every arc in $P$ equals the potential difference between node $s$ and $t$ so that
    \[R_{s,t} = \pi_s-\pi_t = \sum_{a\in P}\Delta \pi_a \leq \sum_{a\in P} \frac{1}{\mu_a^r} \leq \sum_{a\in A} \frac{1}{\mu_a^r}.\]
    If $G$ is a path, then clearly $f_a=1$ for all $a\in A$ and thus $R_{s,t}=\sum_{a\in A} 1 /\mu_a^r$.

    For the second inequality, let $S\subseteq V $ be an $s$--$t$ cut. 
    By the cut condition it is
    \[ 1 = \sum_{a\in \delta (S)} f_a \quad \Rightarrow \quad 1 \leq \sum_{a\in A} \abs{f_a}, \]
    where $\delta(S)$ is the set of arcs with one endpoint in $S$ and one endpoint not in $S$.
    Klimm et al.~\cite[Lemma~9]{Reduction23} showed that $0 \leq \pi_v \leq \pi_s$ for all nodes $v$. This implies that $\abs{\Delta \pi_a }\leq R_{s,t}$ for all arcs $a\in A$.
    Together with the potential-flow equation~\eqref{eq:weymouth_con}, we conclude
    \begin{equation} \label{eq:long_chain}
        1 \leq \sum_{a\in A} \abs{f_a} = \sum_{a\in A} \mu_a {\abs{\Delta \pi_a}}^{1/r} \leq \sum_{a\in A} \mu_a {(R_{s,t})}^{1/r} \quad \Rightarrow\quad \frac{1}{{{(R_{s,t})}^{1/r}}} \leq  \sum_{a\in A} \mu_a,
    \end{equation} 
    which implies the result by definition of $C_{s,t}$.
    
    If all arcs in $G$ are parallel, then $S=\set{s}$ is an $s$--$t$ cut and $\delta(S)=A$.
    Assuming without loss of generality that all arcs point from $s$ to $t$, it is $f_a>0$ and  $\Delta \pi_a =R_{s,t}$ for all $a\in A$.
    So in this case, all inequalities in~\eqref{eq:long_chain} hold with equality.
\end{proof}

\subsection{Network Design}
\label{subsec:network-design}

In this paper, we study the optimization problem~\eqref{eq:problem} of finding optimal investments in arc conductances to support a potential-based unit flow.
It will be helpful to rephrase the problem~\eqref{eq:problem} using the effective resistance.
To do so, let a vector $y \in \R_{\geq 0}^A$ of conductances be given. 
We define the set $\supp(y)\define \set{a\in A\suchthat y_a>0}$ of arcs with positive conductance as the \emph{support of $y$}. 
The \emph{network induced by $y$} is defined as~$G'=(V,A')$, with arc set~$A'=\supp(y)$ together with the projection of~$y$ to~$A'$. 
The effective resistance between $s$ and $t$ in the network induced by $y$ is the solution value to the following convex optimization problem (cf.~Klimm et al.~\cite{Reduction23}):
\begin{samepage}
\begin{equation}
\label{thomson} 
    \begin{aligned}
        R_{s,t}^y =  \min \Biggl\{ \sum_{a\in A'}\frac{1}{y_a^r}\abs{f_a}^{r+1} : f \in \R^{A'} \text{ with } \Gamma' f = \mathbf{1}_{s} - \mathbf{1}_{t} \Biggr\}, 
    \end{aligned}
\end{equation}
\end{samepage}%
where $\Gamma'$ denotes the node-arc incidence matrix of $G'$.
Note that the induced effective resistance~$R_{s,t}^y$ is infinite if and only if $s$ and $t$ are not connected in $G'$.
Analogously, we denote by $C_{s,t}^y$ the effective conductance between $s$ and $t$ in the network induced by $y$ so that,
\begin{equation*}
    C_{s,t}^y=  \begin{cases}
                            {(\Rst)}^{-1/r} & \text{if } R_{s,t}^y <\infty, \\
                            0                       & \text{otherwise}.
                \end{cases}
\end{equation*}

Using the fact that the highest potential of an $s$--$t$ flow is attained at node~$s$ (cf.~\cite[Lemma~9]{Reduction23}), we reformulate \eqref{eq:problem_intro} as follows:
\begin{subequations}
\label{eq:problem}
\begin{align}
\inf_{x,y}\quad & \sum_{a \in A} \bigl(c_a y_a + \gamma_a x_a\bigr) \\
\text{s.t.}\quad
& R^y_{s,t} \leq B \label{eq:problem:R_st} \\
& 0 \leq y_a \leq x_a \bar{y}_a && \text{for all~} a \in A, \label{eq:problem:bound} \\
& x \in \{0,1\}^A
\end{align}
\end{subequations}

Constraint~\eqref{eq:problem:bound} implies zero conductance~$y_a = 0$ for every arc~$a$ with $x_a=0$. On the other hand, an optimal solution to Problem~\eqref{eq:problem} with zero conductance on an arc can be modified by setting the binary variable of that same arc to zero without increasing the objective value of that solution.
For a feasible solution $(x,y)$ to Problem~\eqref{eq:problem} we can therefore assume without loss of generality that 
\[x_a =1 \quad \text{ if and only if} \quad y_a >0.\]

Note that this formulation is a bilevel optimization problem since Constraint~\eqref{eq:problem:R_st} requires the solution of the inner optimization problem~\eqref{thomson} to determine the effective resistance for a vector of conductances~$y$.
We first show that this optimization problem is monotone in the sense that the effective resistance decreases as $y$ increases. 

\begin{restatable}{lemma}{effResMonotone}
    \label{lem:effResMonotone}
    The effective resistance $\Rst$ is monotonically decreasing if any component~$y_a$, $a \in A$, increases.
\end{restatable}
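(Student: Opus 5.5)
The natural route is through the variational characterization~\eqref{thomson}. We show that for $y\le y'$ componentwise (in particular when only one component $y_a$ increases), every flow that is feasible for the network induced by $y$ remains feasible, with no larger objective value, for the network induced by $y'$. Taking minima then gives $R_{s,t}^{y'}\le \Rst$.

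First we dispose of the trivial cases. If $\Rst=\infty$, there is nothing to show. So assume $s$ and $t$ are connected in $G'=(V,\supp(y))$. Since $y\le y'$, we have $\supp(y)\subseteq\supp(y')$, so $s$ and $t$ are also connected in the network induced by $y'$, and $R^{y'}_{s,t}<\infty$.

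Next, let $f^*\in\R^{\supp(y)}$ be an optimal flow for~\eqref{thomson} with respect to $y$. The minimum is attained because the objective is continuous, coercive and strictly convex on the affine feasible set. Extend $f^*$ by zero to an $f\in\R^{\supp(y')}$. Flow conservation $\Gamma'' f=\mathbf{1}_s-\mathbf{1}_t$ still holds in the larger incidence matrix, since the added columns carry zero flow. Hence $f$ is feasible for the problem defining $R^{y'}_{s,t}$, and
\[
R^{y'}_{s,t}\le \sum_{a\in\supp(y')}\frac{\abs{f_a}^{r+1}}{(y'_a)^r}=\sum_{a\in\supp(y)}\frac{\abs{f^*_a}^{r+1}}{(y'_a)^r}\le \sum_{a\in\supp(y)}\frac{\abs{f^*_a}^{r+1}}{y_a^r}=\Rst .
\]
Here the last inequality holds termwise, because $t\mapsto 1/t^r$ is decreasing on $\R_{>0}$ for $r\ge1>0$.

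The only subtle point is the change of support: the arc sets of the two induced networks differ, so~\eqref{thomson} is posed over different variable spaces. This is handled by the zero extension above, which relies on the fact that arcs outside $\supp(y)$ impose no constraint beyond conservation. The same inequality chain then yields that $C^y_{s,t}$ is monotonically increasing in each component of $y$.
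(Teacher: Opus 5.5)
Your proof is correct and follows the same route as the paper: take an optimal flow for $y$ in~\eqref{thomson}, extend it by zero to the possibly larger support of the increased conductance vector, and compare objective values. Your termwise inequality $1/(y'_a)^r \le 1/y_a^r$ is stated more carefully than the paper's chain. The paper writes the middle step as an equality, but it is only an inequality when $y_{a^*}$ increases on an arc that is already in the support.
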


\begin{proof}
    Let $\tilde{y}, \hat y\in \R^A_{\geq 0}$ be such that there is $a^*\in A$ with $\tilde{y}_a=\hat y_a$ for all $a\neq a^*$ and $\tilde{y}_{a^*}\leq \hat y_{a^*}$.
    Let $\tilde{G} = (V, \tilde{A})$ and $\hat{G} = (V, \hat{A})$ be the graph induced by $\tilde{y}$ and $\hat{y}$, respectively.
    If $R_{s,t}^{\tilde{y}} = \infty$, then clearly $\R_{s,t}^{\tilde{y}} \geq R_{s,t}^{\hat y}$.
    We may therefore assume that there exists an optimal solution $\tilde{f} \in \R^{\tilde{A}}$ to Problem~\eqref{thomson}.
    We define $\hat f_a \define \tilde{f}_a$ for all arcs $a \in \tilde{A}$ in the support of~$\tilde{y}$.
    If $\tilde{y}_{a^*} = 0$ and $ \hat y_{a^*} >0$, then arc $a^*$ is not in the support of $\tilde{y}$ but in the support of $\hat y$ and we set $\hat f_{a^*}\define 0$. Otherwise, $\hat A$ contains exactly the arcs in $\tilde{G}$.
    In both cases, $\hat f$ is an $s$--$t$ flow in~$\hat{G}$, because $\tilde{f}$ is an $s$--$t$ flow in~$\tilde{G}$. 
    The flow $\hat f$ is therefore feasible for Problem~\eqref{thomson} for $\hat y$ and $\hat G$.
    Hence,
    \[
    R_{s,t}^{\tilde{y}} = \sum_{a\in \tilde{A}}\frac{1}{\tilde{y}_a^r}\abs{\tilde{f}_a}^{r+1}=  \sum_{a\in \hat A} \frac{1}{\hat y_a^r} \abs{\hat{f}_a}^{r+1} \geq R_{s,t}^{\hat y}.\qedhere
    \]
\end{proof}

The effective resistance is also convex, a property that is already known for electrical networks; see, e.g., \cite{GhoshBoydSaberi2008} and \cite{ChanLauSchildWongZhou2022}. 

\begin{restatable}{lemma}{effResConvex}
    \label{lem:effResConvex}
    The effective resistance $\Rst$ is convex in $y$.
\end{restatable}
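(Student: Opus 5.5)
We plan to prove convexity of $y \mapsto \Rst$ directly from the variational characterization~\eqref{thomson}, viewing $\Rst$ as a partial minimization of a jointly convex function. Define, for $y \in \R^A_{\geq 0}$ and $f \in \R^A$, the function
\[
\Phi(f,y) \define \sum_{a \in A} \phi(f_a,y_a), \qquad
\phi(f_a,y_a) \define \begin{cases} \abs{f_a}^{r+1}/y_a^r & \text{if } y_a>0,\\ 0 & \text{if } y_a = 0,\ f_a=0,\\ \infty & \text{if } y_a=0,\ f_a\neq 0.\end{cases}
\]
Then $\Rst = \inf\{\Phi(f,y) : \Gamma f = \mathbf{1}_s - \mathbf{1}_t\}$. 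Indeed, any flow with finite value is supported on $\supp(y)$, so it is exactly a feasible flow for~\eqref{thomson} in the induced network, with the same objective. If $s$ and $t$ are disconnected in the induced network, both sides equal $\infty$.

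The key step is that $\phi$ is jointly convex on $\R \times \R_{\geq 0}$. For $y_a>0$, $\phi$ is the perspective $y_a\, g(f_a/y_a)$ of the convex function $g(u) = \abs{u}^{r+1}$. Indeed, $y_a \abs{f_a/y_a}^{r+1} = \abs{f_a}^{r+1}/y_a^r$. Perspectives of convex functions are jointly convex. The value at $y_a = 0$ is the lower semicontinuous closure of the perspective, and with it convexity is preserved. To be safe, I would verify this boundary case directly. Take $(f^1,y^1)$ and $(f^2,y^2)$ and $\lambda\in(0,1)$, and assume one of $y^1_a,y^2_a$ is $0$. If the inequality's right-hand side is finite, then the corresponding $f_a$ is $0$. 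The claim then reduces to
\[
\frac{\abs{\lambda f^1_a}^{r+1}}{(\lambda y^1_a)^r} = \lambda \frac{\abs{f^1_a}^{r+1}}{(y^1_a)^r},
\]
which holds with equality. Hence $\Phi$ is jointly convex, since it is a sum of such terms.

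We then use the standard fact that partial minimization of a jointly convex function over a convex set preserves convexity. Let $y^1,y^2 \geq 0$ and $\lambda\in[0,1]$. If either $R^{y^i}_{s,t}=\infty$, there is nothing to show. Otherwise, take optimal flows $f^1,f^2$; these exist by~\eqref{thomson}. The combination $\lambda f^1+(1-\lambda)f^2$ satisfies the linear constraint $\Gamma f = \mathbf{1}_s-\mathbf{1}_t$. Therefore
\[
R^{\lambda y^1+(1-\lambda)y^2}_{s,t} \leq \Phi(\lambda f^1+(1-\lambda)f^2,\lambda y^1+(1-\lambda)y^2) \leq \lambda R^{y^1}_{s,t}+(1-\lambda)R^{y^2}_{s,t}.
\]
This proves convexity, with $\infty$ allowed as an extended value.

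The main obstacle is the careful handling of the support. Problem~\eqref{thomson} is posed on the induced arc set $A'$, which changes with $y$. The extended-value reformulation via $\phi$ is what makes the comparison across different supports legitimate. Once that identification and the boundary case of joint convexity are checked, the rest is routine.
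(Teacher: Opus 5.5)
Your proof is correct and follows essentially the same route as the paper. Both arguments view $\Rst$ as the optimal value of a parametric convex program whose objective is a sum of perspectives of $\abs{f_a}^{r+1}$ over the convex set of unit $s$--$t$ flows. The differences are cosmetic: the paper puts the support condition into the feasible-set map $F(y)$ and cites Fiacco and Kyparisis for the partial-minimization step, whereas you fold the support condition into the extended-valued $\phi$ and check the $y_a=0$ boundary case and the partial-minimization inequality directly.
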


\begin{proof}
    We interpret the effective resistance~$\Rst$ as the optimal value function of the parametric nonlinear optimization problem~\eqref{thomson} and apply results on general parametric convex optimization.
    Let  $g\colon \R^A \times \R^A_{\geq 0}\rightarrow \R$ be a scalar function, $\smash{F\colon \R_{\geq 0}^A\rightarrow 2^{\R^A}}$ a point-to-set-map.
    Consider a general parametric optimization problem of the form
    \[ \min \{ g(f,y) \, \colon\,   f\in F(y)\}. \]
    Then, the optimal value function~$g^* \colon \R_{\geq 0}^A \rightarrow \R\cup \set{\infty}$ 
    is defined  as
    \begin{equation*}
        g^*(y) \define    \begin{cases}
                            \inf \{g(f,y) \suchthat f\in F(y)\} & \text{if } F(y) \neq \emptyset,\\
                            + \,\infty & \text{otherwise}.
                    \end{cases} 
    \end{equation*}
    Fiacco and Kyparisis~\cite[Proposition~2.1]{FiaccoKyparisis1986} showed that the optimal value function~$g^*$ is convex on a set $S\subseteq \R^A_{\geq 0}$ if the following three conditions are satisfied:
    \begin{enumerate}
        \item $S$ is convex,
        \item $g$ is convex on the set $\set{(f,y)\suchthat f\in F(y), \,y\in S}$,
        \item $F$ is convex on $S$,
    \end{enumerate}
    where the point-to-set map $F$ is convex on $S$ if for all $y$, $y'\in S$ and $\lambda \in (0,1)$, it holds that
    \[ \lambda F(y) + (1-\lambda) F(y') \subseteq F\big(\lambda\, y + (1-\lambda) y'\big).\]

    We consider $S\define \R_{\geq 0}^A$, which is convex.
    To prove the convexity of $\Rst$ on $S$, we use a slightly different problem formulation than the one stated in~\eqref{thomson}. 
    Here, we define $F(y)$ for $y\in S$ as the set 
    \begin{equation} \label{eq:Fyset}
        F(y)\define \set{f\in \R^{A}\suchthat \Gamma f =\mathbf{1}_s - \mathbf{1}_{t}, \, \supp(f)\subseteq \supp(y)}
    \end{equation} 
    where $\Gamma$ is the node-arc incidence matrix of $G$.
    For $f\in F(y)$ and $y\in S$, we define $g(f,y)$ as
    \begin{equation*}
        g(f,y)\define \sum_{a\in \supp(y)} \frac{1}{y_a^r}\abs{f_a}^{r+1}.
    \end{equation*}
    It can be verified that  $g^*(y)=\Rst$ for all $y\in S$.

    We show that all conditions of the Proposition are satisfied.
    The set $S$ is convex.
    The function $g$ is the sum of functions of the form
    \[
    g_a(f_a,y_a) =  \begin{cases}
                    \frac{1}{y_a^r}\abs{f_a}^{r+1} & \text{if } y_a > 0, \\
                    0 & \text{if } f_a = 0,\\
                    \infty & \text{otherwise},
                \end{cases} 
    \] 
    which are perspective functions of the convex function $h(f_a) = \abs{f_a}^{r+1}$.
    Thus, $g_a$ is convex on the set $\set{(f,y)\suchthat f\in F(y), \,y\in S}$ for each $a\in A$~\cite{rockafellor1970}.
    Therefore, $g$ is convex as well.
    
    For the convexity of $F$, let $y,y'\in S$, $f\in F(y)$, $f'\in F(y')$, and $\lambda\in (0,1)$.
    We show that $\lambda f +(1-\lambda)f'\in F(\lambda y+(1-\lambda) y')$.
    Due to the linearity of the matrix-vector multiplication, we have
    \[ \Gamma \big(\lambda f +(1-\lambda)f' \big) =  \lambda  \Gamma f +(1-\lambda) \Gamma f' = \lambda (\mathbf{1}_s - \mathbf{1}_{t}) + (1-\lambda)(\mathbf{1}_s - \mathbf{1}_{t}) = \mathbf{1}_s - \mathbf{1}_{t}.\]
    To prove that $\supp(\lambda f+(1-\lambda) f')\subseteq \supp (\lambda y+(1-\lambda) y')$, we show that
    \[ \lambda y_a+(1-\lambda) y'_a = 0 \quad \Rightarrow \quad \lambda f_a+(1-\lambda) f_a' = 0.\]
    So let $a\in A$ such that $\lambda y_a+(1-\lambda) y'_a = 0$.
    Then $y_a = 0$ and $y'_a=0$ because $y_a\geq 0$, $y'_a\geq 0$, and $\lambda >0$ by assumption.
    Because $f\in F(y)$, $\supp(f) \subseteq \supp(y)$ follows and  the same holds for $f'$ and $y'$.
    Therefore, since $y_a = 0$ and $y'_a=0$, it follows that $f_a = 0$ and $f'_a=0$. 
    This implies $\lambda f_a+(1-\lambda) f_a' = 0$.
    Thus, $\lambda f +(1-\lambda)f'\in F(\lambda y+(1-\lambda) y')$, and $F$ is therefore convex.
    
    All three conditions of the Proposition are satisfied so that $g^*$ is convex on $S$.
    We conclude that $\Rst$ is convex in $y$.
\end{proof}

As an immediate corollary, we obtain that the Network Design Problem~\eqref{eq:problem} can be solved with convex optimization techniques when there are no fixed costs.

\begin{corollary}
\label{cor:convex}
The Network Design Problem~\eqref{eq:problem} in the case $\gamma \equiv 0$ admits a convex formulation and can therefore be approximated to arbitrary precision by the ellipsoid method.
\end{corollary}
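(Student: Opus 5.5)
With $\gamma \equiv 0$, the binary variables~$x$ have zero cost and can be eliminated. Concretely, set $x_a = 1$ for every arc; this is always allowed and does not change the objective. Constraint~\eqref{eq:problem:bound} then reduces to the box constraint $0 \leq y_a \leq \bar y_a$. Problem~\eqref{eq:problem} thus becomes
\[
\min\Bigl\{ \sum_{a\in A} c_a y_a \suchthat R^y_{s,t} \leq B,\ 0 \leq y_a \leq \bar y_a \text{ for all } a\in A \Bigr\}.
\]
This problem has a linear objective over a continuous variable vector $y \in \R^A$.

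The key step is to show that the feasible region is convex. The box $\prod_a [0,\bar y_a]$ is convex. By Lemma~\ref{lem:effResConvex}, $y \mapsto \Rst$ is a convex extended-valued function on $\R^A_{\geq 0}$. Its sublevel set $\{y \suchthat \Rst \leq B\}$ is therefore convex, and the feasible region, as an intersection of convex sets, is convex. Equivalently, one can lift to the joint formulation with flow variables $f$: the set $\{(f,y) \suchthat \Gamma f = \mathbf 1_s - \mathbf 1_t,\ \sum_a g_a(f_a,y_a) \leq B\}$ is convex, because each $g_a$ is a perspective function, as used in the proof of Lemma~\ref{lem:effResConvex}. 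This lifted formulation is explicit and avoids the bilevel structure.

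To apply the ellipsoid method, I need a (weak) separation or evaluation oracle and bounding boxes. Given $y$, the value $\Rst$ is computed by the convex program~\eqref{thomson}, which can itself be solved to arbitrary precision. A subgradient of $\Rst$ with respect to $y_a$ is obtained from an optimal flow~$f^*$ via the envelope theorem: $\partial \Rst/\partial y_a = -r\,\abs{f^*_a}^{r+1}/y_a^{r+1}$. This gives a separating hyperplane for infeasible points. Alternatively, I would work directly with the lifted $(f,y)$ formulation, where constraints are explicit and gradients are closed-form.

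For the required bounds, Lemma~\ref{lem:boundsEffResEffCon} and Lemma~\ref{lem:effResMonotone} are the tools. Feasibility is checked by testing whether $R^{\bar y}_{s,t} \leq B$, the best possible value by monotonicity. For infinite $\bar y_a$, conductances can be capped at a finite value so that a single path already suffices, giving a bounded search region. I expect the main obstacle to be the technical conditions of the ellipsoid method: an outer ball and an inner ball, or full-dimensionality. Boundary points $y_a = 0$ make $g_a$ discontinuous. I would handle this by slightly relaxing $B$ to $B(1+\varepsilon)$, or by requiring $y_a \geq \delta$, and arguing via monotonicity and continuity of $\Rst$ on the positive orthant that this costs only an arbitrarily small amount in the objective. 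This yields an approximation to arbitrary precision.
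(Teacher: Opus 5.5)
Your proposal is correct and follows the paper's reasoning. With $\gamma\equiv 0$ the variables $x$ can be fixed to $1$ at no cost, so the objective becomes linear in $y$, and the feasible region is the box $\prod_{a}[0,\bar y_a]$ intersected with the sublevel set $\{y \suchthat R^y_{s,t}\le B\}$ of the convex function from Lemma~\ref{lem:effResConvex}. The paper treats this as an immediate corollary and skips the ellipsoid-method details (oracles, subgradients, bounding and inner balls, boundary behavior), which you additionally sketch.
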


Finally, note that constraint~\eqref{eq:problem:R_st}  can be replaced by 
\begin{equation}
    \tag{4b$'$} \label{problem:effcon}
    C_{s,t}^y \geq D \quad \text{with } D \define \frac{1}{{B}^{1/r}},
\end{equation}
where $C_{s,t}^y $ is the effective conductance between $s$ and $t$ in the network induced by $y$.
In the technical sections, we will use both constraint variations, since we find it sometimes more convenient to work with the effective conductance instead of the effective resistance.

\section{Network Design without Conductance Bounds}
\label{sec:path-solutions}

In this section, we consider the Network Design Problem~\eqref{eq:problem} in the case where the bound on the conductance is infinite, i.e., $\bar{y}_a=\infty$ for all arcs $a \in A$.
As we will show in~\Cref{thm:NP_hardness} (Section~\ref{sec:complexity}), this special case is still $\mathsf{NP}$-hard.
In this section, we show that the problem admits a fully polynomial-time approximation scheme (FPTAS) and is even solvable in polynomial time if we have only fixed costs or no fixed costs at all, i.e., when $c \equiv 0$ or when $\gamma \equiv 0$.

\begin{theorem}\label{thm:poly_time}
We consider Problem~\eqref{eq:problem} without conductance bounds, i.e., $\bar{y} \equiv \infty$.
\begin{enumerate}[(a)]
\item\label{thm:poly_time:a} If $c \equiv 0$ or $\gamma \equiv 0$, the problem can be solved in polynomial time.
\item\label{thm:poly_time:b} For arbitrary costs~$c$, $\gamma\in\R_{\geq0}^A$, there is a fully polynomial-time approximation scheme.
\end{enumerate}
\end{theorem}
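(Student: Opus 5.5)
The plan is to prove first that, when $\bar y\equiv\infty$, some optimal solution is supported on a single $s$--$t$ path. Then I evaluate the cost of a path solution in closed form, which turns the problem into a (bicriteria) shortest path problem.

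\emph{Step 1: cost of a fixed path.} Let $P$ be an $s$--$t$ path. By \Cref{lem:boundsEffResEffCon}(a), $R^y_{s,t}=\sum_{a\in P}1/y_a^r$ for $y$ supported on $P$. We therefore minimize $\sum_{a\in P}c_ay_a$ subject to $\sum_{a\in P}y_a^{-r}\le B$. Put $w_a\define c_a^{r/(r+1)}$ and $p\define (r+1)/r$. Hölder's inequality (equivalently, a Lagrange argument giving $y_a\propto c_a^{-1/(r+1)}$) shows that the optimal value is $D\,w(P)^{p}$ with $D=B^{-1/r}$. If some $c_a=0$, we let $y_a\to\infty$; the infimum in \eqref{eq:problem} is still $D\,w(P)^p$. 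So the best solution supported on $P$ costs
\[
\gamma(P)+D\,w(P)^{p}.
\]

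\emph{Step 2: some path is optimal.} Take any feasible $(x,y)$ and let $f$ be the optimal flow in \eqref{thomson}. Hölder's inequality applied to $w_a\abs{f_a}=(c_ay_a)^{r/(r+1)}\bigl(\abs{f_a}^{r+1}/y_a^r\bigr)^{1/(r+1)}$ gives
\[
\sum_a w_a\abs{f_a}\le\Bigl(\sum_a c_ay_a\Bigr)^{r/(r+1)}\,(R^y_{s,t})^{1/(r+1)}\le\Bigl(\sum_a c_ay_a\Bigr)^{r/(r+1)}B^{1/(r+1)},
\]
and hence $\sum_a c_ay_a\ge D\bigl(\sum_a w_a\abs{f_a}\bigr)^{p}$.

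The potential-based flow $f$ is acyclic, since potentials strictly decrease along flow-carrying arcs. Orienting arcs along the flow, decompose $f$ into $s$--$t$ paths $P_i$ with weights $\lambda_i\ge0$, $\sum_i\lambda_i=1$. Then $\sum_a w_a\abs{f_a}=\sum_i\lambda_iw(P_i)\ge\min_i w(P_i)$. Let $P_{i^*}$ attain this minimum. Since $P_{i^*}\subseteq\supp(y)$ and $\gamma\ge0$, we have $\gamma(P_{i^*})\le\sum_a\gamma_ax_a$. So $P_{i^*}$ costs at most the cost of $(x,y)$.

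The problem is therefore $\min_P\ \gamma(P)+D\,w(P)^p$ over $s$--$t$ paths in the underlying undirected graph. Arcs may be traversed backwards, since flow direction is free.

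\emph{Step 3: algorithms.}
\begin{itemize}
\item For part~(\ref{thm:poly_time:a}) with $c\equiv0$, the objective is $\gamma(P)$. With $\gamma\equiv0$, the objective is a monotone function of $w(P)$. In both cases a single nonnegative-weight shortest path computation suffices, followed by setting $y$ on the path as in Step~1.
\item For part~(\ref{thm:poly_time:b}), the objective is monotone in both $\gamma(P)$ and $w(P)$, and $(1+\varepsilon)$-scaling of both leaves it within a factor $(1+\varepsilon)^{p}$. I would approximate the Pareto frontier using the constrained shortest path FPTAS of Hassin / Lorenz--Raz. Guess $W$ from a geometric grid with ratio $1+\varepsilon$ between $\min_a w_a$ and $\sum_a w_a$ (paths with $w(P)=0$ are handled separately). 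For each guess, compute a path with $w(P)\le(1+\varepsilon)W$ and $\gamma(P)$ at most the minimum of $\gamma$ over paths with $w\le W$. Output the best resulting path.
\end{itemize}

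The main obstacle is numerical. The weights $w_a=c_a^{r/(r+1)}$ are irrational in general, so they should be rounded to a $(1+\varepsilon)$-accurate rational value. This keeps the FPTAS and the exactness claims of part~(\ref{thm:poly_time:a}) valid; in the $\gamma\equiv0$ case, comparisons of sums of such powers need care. I would handle this by rounding, or by stating exactness in the real-arithmetic model.
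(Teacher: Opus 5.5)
Your proposal is correct, but part of it follows a different route from the paper's. Part~(a) and the closed form $\gamma(P)+D\,w(P)^p$ match the paper; your H\"older argument gives the same $y_a\propto c_a^{-1/(r+1)}$ that the paper obtains from the KKT multiplier of \Cref{lem:KKT}.

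\emph{Reduction to paths.} The paper fixes an optimal solution for $\gamma\equiv0$ and splits arcs into parallel copies so that the path decomposition becomes arc-disjoint. It then picks the path minimizing $c_P/f_P$, rescales its conductances by $1/f_{P^*}$, and handles fixed costs afterwards by restricting to $\supp(x^*)$. Your bound $\sum_a c_ay_a\ge D(\sum_a w_a\abs{f_a})^p$, combined with averaging over the decomposition, treats $c$ and $\gamma$ in one step and yields the closed-form cost at the same time. It also compares against an \emph{arbitrary} feasible solution, so it does not need the infimum in \eqref{eq:problem} to be attained. That infimum is not attained, for instance, when the best path mixes arcs with $c_a=0$ and $c_a>0$; the paper handles this only informally by taking such $y_a$ ``infinitely large''. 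The paper's argument, in turn, is constructive: it turns the given network into a path solution by rescaling.

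\emph{FPTAS.} The paper grids the multiplier $\lambda$ and calls Hassin's FPTAS with costs $(\lambda r)^{1/(r+1)}w_a+\gamma_a$ and exact length bound $(\lambda r)^{r/(r+1)}B$. Since $(\lambda^* r)^{r/(r+1)}B=w(P^*)$, this is a grid over exactly your parameter $W$. The difference lies in the subproblem:
\begin{itemize}
\item the paper keeps $w(P)\le W$ exact and approximately minimizes the linearization $\gamma(P)+(W/B)^{1/r}w(P)$, which dominates $\gamma(P)+D\,w(P)^p$ for $w(P)\le W$ and agrees with it at $w(P)=W$;
\item you minimize $\gamma$ alone under a relaxed length bound and then evaluate the exact convex cost.
\end{itemize}

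\emph{Two small fixes.} First, the relaxed-bound guarantee you state is not what Hassin or Lorenz--Raz provide; they keep the bound exact and lose a factor $1+\varepsilon$ in the cost. Your analysis works verbatim with their guarantee, giving $\gamma(P)\le(1+\varepsilon)\gamma(P^*)$ and $w(P)\le W$. Alternatively, you can obtain your version by rounding $w$ to multiples of $\varepsilon W/n$ and running an exact Dijkstra on $\gamma$ in the layered graph. Second, with your version the grid and the relaxation each lose a factor $1+\varepsilon$ in $W$. The final ratio is therefore $(1+\varepsilon)^{2p}$, not $(1+\varepsilon)^p$, so $\varepsilon$ must be rescaled.
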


In \Cref{subsec:path} we show that, without conductance bounds, Problem~\eqref{eq:problem} always has an optimal solution whose support is an $s$--$t$ path. In \Cref{subsec:shortest-path} we reduce the problem with special cost structure described in \Cref{thm:poly_time}~\eqref{thm:poly_time:a} to a shortest path problem which can be efficiently solved. Finally, in \Cref{subsec:constrained-shortest-path}, we develop a fully polynomial-time approximation scheme for the case of arbitrary costs.


\subsection{Existence of Optimal Path Solutions}
\label{subsec:path}

For the case without conductance bounds ($\bar{y} \equiv \infty$), we show that Problem~\eqref{eq:problem} always has an optimal solution whose support is an $s$--$t$ path. 

\begin{lemma} 
When~$\bar{y} \equiv \infty$, then there exists an $s$--$t$ path $P$ and an optimal solution $(x,y)$ to Problem~\eqref{eq:problem} such that $x_a = 0$ and $y_a = 0$ for all $a \notin P$.
\end{lemma}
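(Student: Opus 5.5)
The plan is to show that every feasible solution $(x,y)$ of Problem~\eqref{eq:problem} is dominated in cost by a feasible solution supported on a single $s$--$t$ path. Once that is done, I will argue that the infimum over path-supported solutions is attained. For a fixed $s$--$t$ path $P$, by Lemma~\ref{lem:boundsEffResEffCon}(a) the problem reduces to
\[
\min \Bigl\{\sum_{a\in P}(\gamma_a + c_a y_a) : \sum_{a\in P} y_a^{-r}\le B,\ y_a>0\Bigr\}.
\]
This minimum is attained. For the arcs with $c_a>0$, the feasible region can be restricted to a compact set that stays away from $y_a=0$. The arcs with $c_a=0$ receive a huge conductance, so they use only a negligible part of the budget. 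Since there are only finitely many paths, a best path solution exists. Combined with the domination argument, it is a global optimum.

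For the domination step, let $(x,y)$ be feasible, so $R^y_{s,t}\le B$. Take the unit $s$--$t$ potential-based flow $(\pi,f)$ in the network induced by $y$. I orient every arc with $f_a\neq 0$ in the direction of its flow, which is the direction of decreasing potential. The conductance is symmetric, so only the traversal direction matters. The flow-carrying arcs then form an acyclic network, because potentials strictly decrease along them. I decompose $|f|$ into $s$--$t$ path flows: $\lambda_P\ge 0$ with $\sum_P\lambda_P=1$ and $|f_a|=\sum_{P\ni a}\lambda_P$. Acyclicity and the consistent orientation ensure that no cycles or cancellations occur.

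For each path $P$ in the decomposition, I define $y^P_a \define y_a/|f_a|$ for $a\in P$ and $y^P_a \define 0$ otherwise, with $x^P=\chi_P$.
\begin{itemize}
\item \emph{Feasibility.} By~\eqref{eq:weymouth_con}, $\Delta\pi_a=(|f_a|/y_a)^r = (y^P_a)^{-r}$. By Lemma~\ref{lem:boundsEffResEffCon}(a),
\[
R^{y^P}_{s,t}=\sum_{a\in P}\Delta\pi_a=\pi_s-\pi_t=R^y_{s,t}\le B .
\]
\item \emph{Cost.} Averaging over the decomposition,
\[
\sum_P\lambda_P\sum_{a\in P}c_a y^P_a=\sum_{a:f_a\ne0}c_a\frac{y_a}{|f_a|}\sum_{P\ni a}\lambda_P=\sum_{a:f_a\neq0}c_ay_a\le c^\top y ,
\]
and $\sum_P\lambda_P\gamma(P)\le\gamma(\supp f)\le\gamma^\top x$.
\end{itemize}
A convex combination of the path costs is therefore at most the cost of $(x,y)$, so some path $P$ yields a feasible solution that is no more expensive.

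The main care point is the flow decomposition. It must use consistently oriented, acyclic flow-carrying arcs, so that $|f_a|=\sum_{P\ni a}\lambda_P$ holds exactly. It must also use the fact, via~\cite[Lemma~9]{Reduction23}, that potentials decrease along flow, so that every path carries the full potential drop $\pi_s-\pi_t$. The absence of conductance bounds is exactly what permits $y^P_a=y_a/|f_a|\ge y_a$.
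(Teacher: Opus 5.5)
Your domination step is correct and is, in substance, the paper's construction. With arcs oriented along the flow, your conductances $y_a/\abs{f_a}$ equal $(\Delta\pi_a)^{-1/r}$, which is exactly what the paper obtains after splitting arcs into parallel copies and rescaling by $1/f_{P^*}$. Choosing the cheapest path is the same averaging inequality. Your version is more economical: since $\abs{f_a}=\sum_{P\ni a}\lambda_P\le 1$, a single convex combination bounds fixed and variable costs at once. You therefore need neither the arc splitting nor the paper's separate reduction of general $\gamma$ to the case $\gamma\equiv 0$ on the support of an optimal solution.

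The genuine gap is your existence step. You claim that for a fixed path $P$ the minimum of $\sum_{a\in P}(\gamma_a+c_ay_a)$ subject to $\sum_{a\in P}y_a^{-r}\le B$ is attained. This is false whenever $P$ contains both an arc with $c_a=0$ and an arc with $c_a>0$. A zero-cost arc with finite conductance always uses a strictly positive part of the budget. Increasing its conductance frees budget, which lets you strictly decrease the conductance of a positive-cost arc and hence strictly lower the cost.

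For instance, take two arcs in series $s\to v\to t$ with $c=(0,1)$, $\gamma\equiv 0$, $r=1$, $B=1$. Feasibility means $1/y_1+1/y_2\le 1$, the cost is $y_2>1$, and the infimum $1$ is not attained. This graph has a single $s$--$t$ path, so Problem~\eqref{eq:problem} has no optimal solution at all. The lemma in its literal form fails here, and your existence step cannot be repaired in general.

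The paper's proof sidesteps this by starting from an optimal solution, and right after the lemma it adopts the convention that zero-cost arcs receive conductance $\infty$. You should do the same. Drop the existence paragraph and conclude from your domination step alone that, whenever an optimal solution exists, there is one supported on an $s$--$t$ path. Your compactness argument does go through in special cases such as $c_a>0$ for all arcs or $c\equiv 0$.
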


\begin{proof}
    We start by proving the result for the special case where the fixed costs~$\gamma \equiv 0$.
    Let $(x,y)$ be an optimal solution to Problem~\eqref{eq:problem} and let $(f,\pi)$ be the potential-based unit $s$--$t$ flow in the network induced by~$y$, so that $\Rst =\pi_s-\pi_t$.
    We may assume without loss of generality that the arcs are oriented such that the flow values~$f_a$ and the potential differences~$\Delta \pi_a = \pi_u-\pi_v$ are non-negative for all arcs~$a=(u,v)\in A$. Since positive flow only occurs on arcs with positive potential difference,~$f$ is acyclic and can be decomposed into flow along $s$--$t$ paths~$\mathcal P$. That is,~$f_P>0$ units of flow are sent along path~$P\in\mathcal P$ such that~$f_a=\sum_{P\in{\mathcal P}\colon a\in P}f_P$ and~$\sum_{P\in\mathcal P}f_P=1$.
    
    We may even assume that the paths in $\mathcal P$ are pairwise arc-disjoint:
    Otherwise, let $\mathcal P_a\subseteq \mathcal P$ be the set of paths that share a fixed arc $a\in A$.
    Split arc~$a$ into $\abs{\mathcal P_a}$ many parallel copies denoted by $a_P$ for every $P\in \mathcal P_a$.
    When sending a flow of value $f_P$ through every arc $a_P$, it is clear that we obtain a disjoint flow-decomposition when doing this for every arc in the original solution. 
    If we set the conductance to $y_{a_P}\define f_P/{(\Delta \pi_a)}^{1/r}$, then the potential vector~$\pi$ together with the split flow clearly fulfills the potential-flow equation in~\eqref{eq:weymouth_con}.
    In particular, the effective resistance does not change when an arc $a$ is split like this.
    Also the cost of the solution remains the same when setting the variable cost of every arc $a_P$ to $c_a$, because 
    \[
    \sum_{P\in\mathcal P_a}c_a\, y_{a_P}= \frac{c_a}{(\Delta \pi_a)^{1/r}} \sum_{P\in\mathcal P_a} f_P= \frac{c_af_a}{(\Delta \pi_a)^{1/r}} = c_ay_a,
    \] 
    due to the definition of $y_{a_P}$, the flow-decomposition, and the potential-flow equation~\eqref{eq:weymouth_con}.
    We may therefore assume from now on that the paths in $\mathcal P$ are pairwise arc-disjoint.
    
    Each path $P\in\mathcal P$ induces a partial cost $c_P\define \sum_{a\in P}c_a y_a$.
    So, the objective value of the considered feasible solution is equal to $\sum_{P\in\mathcal P}c_P$.
    We show that a path with the lowest ratio of partial cost and flow value yields an optimal solution.
    Let  $P^*\in \mathcal P$ be such a path, i.e.,
    \begin{equation} \label{eq:Pstar}
        P^*\in \argmin_{P\in\mathcal P}{\frac{c_P}{f_P}},
    \end{equation}
    and define $(x^*,y^*)$ by $x^*_a \define 1$ and $y^*_a \define y_a/f_{P^*}$ for all $a\in P^*$, and~$x^*_a\define y^*_a\define 0$ otherwise.
    The effective resistance of a path is the sum of the resistances of its arcs due to \Cref{lem:boundsEffResEffCon}. So,
    \begin{equation*}
        R_{s,t}^{y^*} = \sum_{a\in P^*} \frac{1}{(y_a^*)^r} = \sum_{a\in P^*} \frac{(f_{P^*})^r}{(y_a)^r} = \sum_{a\in P^*} \frac{(y_a)^r \Delta\pi_a}{(y_a)^r} = \pi_s-\pi_t = R_{s,t}^y \leq B,
    \end{equation*}
    where we use the definition of $y_a^*$, the potential-flow equation~\eqref{eq:weymouth_con}, and the telescoping sum $\sum_{a\in P^*} \Delta \pi_a = \pi_s-\pi_t$ along $s$--$t$ path~$P^*$. 
    The solution~$(x^*,y^*)$ is therefore feasible and its cost is
    \begin{equation*}
        \sum_{a\in P^*}y_a^* c_a = \sum_{a\in P^*} \frac{y_a}{f_{P^*}} c_a =   \frac{1}{f_{P^*}} c_{P^*}  = \sum_{P\in \mathcal P} f_P\frac{c_{P^*}}{f_P^*} \leq \sum_{P\in \mathcal P} f_P\frac{c_{P}}{f_P} = \sum_{P\in\mathcal P} c_P,
    \end{equation*}
    using the definition of $y_a^*$, the definition of $c_{P^*}$, the fact that $1 = \sum_{P\in\mathcal P}{f_P}$, and the definition of $P^*$ in~\eqref{eq:Pstar}.
    The cost of $(x^*,y^*)$ is therefore bounded by the cost of $(x,y)$ and thus optimal. This concludes the proof for the special case of fixed costs~$\gamma\equiv0$.

    It remains to extend the result to arbitrary fixed costs. Consider an optimal solution~$(x^*,y^*)$ and let~$G^*=(V,A^*)$ be the subgraph given by the arc set~$A^* \define \{a\in A \colon x^*_a=1\}$. Notice that~$y^*$ minimizes~$\sum_{a\in A^*}c_ay^*_a$ subject to the constraints~$R^{y^*}_{s,t}\leq B$ and~$\supp(y^*)\subseteq A^*$. In other words, the restriction of~$(x^*,y^*)$ to subgraph~$G^*$ is an optimal solution to Problem~\eqref{eq:problem} on~$G^*$ with~$\gamma\equiv0$. As we argued above, there is such an optimal solution that sends flow along a single $s$--$t$ path. Since the path is contained in $A^*$, setting $x_a =1$ only for the arcs on the path may only decrease the fixed costs.
\end{proof}

Let $\mathcal P$ denote the set of all (arc sets of) $s$--$t$ paths in $G$. As there always exists an optimal solution to Problem~\eqref{eq:problem} such that $\supp(x)$ is an $s$--$t$ path, we reformulate the problem as follows.
We restrict the decision variable vector $x$ to only those arc vectors which induce an $s$--$t$ path~$P\in\mathcal P$.
Since, by \Cref{lem:boundsEffResEffCon}, the effective resistance of a path is the sum of arc resistances along this path, we can replace constraint~\eqref{eq:problem:R_st} by the following inequality:
\[\sum_{a\in P} 1/y_a^{r} \leq B.\]
Furthermore, for every solution, decreasing the conductance of an arc on the path such that the inequality is satisfied with equality does not increase the cost.
Thus, Problem~\eqref{eq:problem} with $\bar{y} \equiv \infty$ is equivalent to the following problem:
\begin{samepage}%
\begin{subequations} 
\label{path_problem}
\begin{alignat}{4}
    \underset{P,y}{\inf}        & \quad&&\sum_{a\in P} (c_a \,y_a +\gamma_a)  &\quad &  \label{path_problem:objective}\\
    \text{s.t.} &      && \sum_{a\in P} 1/y_a^{r} = B,  &\quad & \label{path_problem:weymouth_pressure}  \\
                &      && y\in \R^P_{>0}, & \quad &  \\
                &      && P\in  \mathcal P.&  \quad &
\end{alignat}
\end{subequations}
\end{samepage}%

Notice that for a given solution~$(P,y)$ to Problem~\eqref{path_problem}, the conductance~$y_a$ of any arc~$a\in P$ with zero cost~$c_a=0$ can be chosen arbitrarily large without changing the cost of the solution. In this case~$1/y_a^r$ tends to zero and we can safely ignore this arc in constraint~\eqref{path_problem:weymouth_pressure}. In what follows, we denote by $P_+=P\cap \supp(c)$ the set of arcs on path $P\in\mathcal P$ with positive variable cost and always assume that~$y_a$ is chosen infinitely large for any arc~$a\in P\setminus P_+$.


\subsection[Proof of Theorem 5(a)]{Proof of \Cref{thm:poly_time}~\eqref{thm:poly_time:a}}
\label{subsec:shortest-path}

Our proof of \Cref{thm:poly_time}~\eqref{thm:poly_time:a} is based on the following lemma.

\begin{lemma}\label{lem:KKT}
    For every $s$--$t$ path $P\in\mathcal P$ there exists a conductance vector $y\in \R_{>0}^{P}$ such that $(P,y)$ is feasible for Problem~\eqref{path_problem} and such that there is no other feasible conductance vector on $P$ with strictly lower cost value.
    Furthermore, there exists $\lambda>0$ such that
    \begin{equation} \label{eq:KKT}
    \lambda =\frac{c_a \,{y}^{r+1}_a}{r}  \quad \text{for all } a\in P_+.
    \end{equation}  
\end{lemma}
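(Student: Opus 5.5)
Fix a path $P\in\mathcal P$. Following the convention after Problem~\eqref{path_problem}, arcs in $P\setminus P_+$ get infinite conductance, contribute nothing to the constraint, and add only $\gamma_a$ to the cost. Since $\sum_{a\in P}\gamma_a$ is a constant for fixed $P$, the task reduces to the continuous problem
\[
\min\Bigl\{\sum_{a\in P_+} c_a y_a \suchthat \sum_{a\in P_+} y_a^{-r} = B,\ y\in\R^{P_+}_{>0}\Bigr\}.
\]
If $P_+=\emptyset$, the cost is the constant $\sum_{a\in P}\gamma_a$, every choice is optimal, and condition~\eqref{eq:KKT} holds vacuously. So assume $P_+\neq\emptyset$.

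I would not argue existence abstractly. Instead I would write down the candidate given by the Lagrange condition and prove its optimality directly.

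\textbf{Step 1 (candidate).} Stationarity of $\sum c_a y_a + \lambda'(\sum y_a^{-r}-B)$ gives $c_a = \lambda' r\, y_a^{-r-1}$. Setting $\lambda = \lambda' r^2/r$, this is $y_a = (r\lambda/c_a)^{1/(r+1)}$, i.e.\ $\lambda = c_a y_a^{r+1}/r$ as in~\eqref{eq:KKT}.

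\textbf{Step 2 (choosing $\lambda$).} Substituting into the constraint gives
\[
\sum_{a\in P_+}(c_a/(r\lambda))^{r/(r+1)} = B.
\]
The left side is continuous and strictly decreasing in $\lambda>0$, tends to $\infty$ as $\lambda\to 0$, and tends to $0$ as $\lambda\to\infty$. Hence there is a unique $\lambda>0$ solving it, and it is explicit:
\[
\lambda = \frac{1}{r}\Bigl(\sum_{a\in P_+} c_a^{r/(r+1)}\Bigr)^{(r+1)/r} B^{-(r+1)/r}.
\]
This yields a feasible $y>0$ satisfying~\eqref{eq:KKT}.

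\textbf{Step 3 (global optimality).} This is the main step, since the feasible set $\{\sum y_a^{-r}=B\}$ is not convex. I would relax to the convex set $\{\sum y_a^{-r}\le B,\ y>0\}$. This relaxation does not lower the optimum, because any point with strict inequality can have some $y_a$ decreased to reach equality while reducing cost, as already noted before~\eqref{path_problem}.

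For any $z$ in the relaxed set, convexity of $t\mapsto t^{-r}$ gives
\[
z_a^{-r} \ge y_a^{-r} - r y_a^{-r-1}(z_a-y_a).
\]
Summing over $a\in P_+$ and using $\sum z_a^{-r}\le B=\sum y_a^{-r}$ yields $\sum_a r y_a^{-r-1}(z_a - y_a)\ge 0$. By~\eqref{eq:KKT}, $r y_a^{-r-1} = r^2 \lambda^{-1}\cdot c_a/r^2 \cdot r = (r/\lambda)\cdot c_a/r\cdot r$, which is simply $c_a$ times the positive constant $r^2/(r\lambda)=r/\lambda$. Therefore $\sum_a c_a(z_a-y_a)\ge 0$, so no feasible $z$ is strictly cheaper.

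This tangent-line (KKT-sufficiency) argument is the crux: it proves optimality without needing compactness. Existence of an optimum would otherwise be delicate, because $y_a$ may grow unboundedly when some $c_a=0$, a case already removed by restricting to $P_+$.
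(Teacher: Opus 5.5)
Your proof is correct, but it runs in the opposite logical direction from the paper's. The paper fixes the path of an optimal solution and substitutes $z_a = 1/y_a^r$, which turns the problem into minimizing the convex function $\sum_{a\in P_+} c_a z_a^{-1/r}$ subject to the linear constraint $\sum_{a\in P_+} z_a = B$. It checks Slater's condition at $z_a = B/\card{P_+}$ and then invokes KKT \emph{necessity} to obtain the multiplier $\lambda$. That is short, but it tacitly presupposes that an optimal conductance vector on $P$ exists; this is true, since the objective blows up as any $z_a \to 0$, but the paper does not argue it. Formally, it also only treats the optimal path. You instead construct the KKT point explicitly and pin down $\lambda$ via monotonicity of $\lambda \mapsto \sum_{a\in P_+}(c_a/(r\lambda))^{r/(r+1)}$. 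You then prove global optimality by KKT \emph{sufficiency}, using the tangent-line inequality for the convex map $t\mapsto t^{-r}$ on the relaxed convex set $\set{y>0 \suchthat \sum_{a\in P_+} y_a^{-r}\le B}$. This gives a self-contained proof that covers every path, including the case $P_+=\emptyset$. It establishes existence of the minimizer without any compactness argument, shows $\lambda$ is unique, and already produces the closed form the paper derives afterwards for~\eqref{eq:y_explicit}. Two constants in your write-up are off, harmlessly. Stationarity $c_a = \lambda' r\, y_a^{-r-1}$ gives $\lambda = \lambda'$, not $\lambda' r$. Likewise,~\eqref{eq:KKT} gives $r\, y_a^{-r-1} = c_a/\lambda$, not $r c_a/\lambda$. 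Your candidate is defined directly by $y_a = (r\lambda/c_a)^{1/(r+1)}$, and Step~3 only needs $r\,y_a^{-r-1}$ to be a positive multiple of $c_a$, so the argument is unaffected.
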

\begin{proof}
    Consider Problem~\eqref{path_problem} and fix the path~$P$ in an optimal solution $(P,y)$. Substituting~$z_a=1/y_a^{r}$ for all $a \in P_+$, the optimal conductance $y$ can be determined from the solution to the following convex program:
    \begin{samepage}
    \begin{alignat*}{3}
        \inf        & \quad&&\sum_{a\in P_+} c_a/z_a^{1/r} \\
        \text{s.t.} &      && \sum_{a\in P_+} z_a = B, \\
                    &      && z\in \R^{P_+}_{>0}. 
    \end{alignat*}
    \end{samepage}%
    Then, Slater's condition holds for any $z\in \R_{>0}^{P_+}$  which satisfies
    $z_a = B/\card{P_+}$ for all $a\in P_+$,
    where $\card{P_+}$ denotes the number of arcs contained in $P_+$.
    Therefore, any $z\in \R_{>0}^{P_+}$ and $\lambda  \in \R$ for which strong duality holds must fulfill the KKT conditions.
    Hence,
    \begin{equation}
        \lambda = \frac{c_a}{rz_a^{\frac{r+1}{r}}} \quad \text{for all~} a\in P_+.
    \end{equation}
    Substituting $z$ back to $y$ yields the result.
\end{proof}

With \Cref{lem:KKT} we reformulate Problem~\eqref{path_problem} even further.
Using~\eqref{eq:KKT}, we replace the variables $y_a$ in Constraint~\eqref{path_problem:weymouth_pressure}, and obtain
\begin{align*}
\sum_{a\in P_+} {(r\lambda/c_a)}^{-\frac{r}{r+1}} &= B & &\Leftrightarrow & 
\lambda &= \frac{\Big(\sum_{a\in P_+}{c_a^{\frac{r}{r+1}}}\Big)^{\frac{r+1}{r}}}{rB^{\frac{r+1}{r}}}.
\end{align*}
With this we eliminate variable $\lambda$ in~\eqref{eq:KKT} and get
\begin{equation}\label{eq:y_explicit}
    y_a=\frac{\Big(\sum_{a'\in P}{c_{a'}^{\frac{r}{r+1}}}\Big)^{\frac{1}{r}}}{c_a^{\frac{1}{r+1}}B^{\frac{1}{r}}} \quad \text{for } a \in P_+.
\end{equation}
Now, we eliminate the variable $y$ in the objective, so that
\[
\sum_{a\in P} (c_ay_a+\gamma_a) = \Bigg(\sum_{a\in P} {\biggl(\frac{c_a}{{B}^{1/r}}\biggr)}^{\frac{r}{r+1}} \Bigg)^{\frac{r+1}{r}} +\sum_{a\in P}\gamma_a.
\]
Let $c'\in \R_{\geq0}^A$ be defined such that $c'_a = {(c_a/B^{1/r})} ^{r/(r+1)}$ for all $a\in A$.
Then,  Problem~\eqref{path_problem} is equivalent to the
following optimization problem
\begin{samepage}
\begin{align}\label{quadratic_path_problem}
    \underset{P\in\mathcal P}{\min} \Big(\sum_{a\in P} c'_a \Big)^{\frac{r+1}{r}} +\sum_{a\in P}\gamma_a.  
\end{align}
\end{samepage}

With this formulation we finally prove~\Cref{thm:poly_time}~\eqref{thm:poly_time:a}.
If $c\equiv0$, Problem~\eqref{quadratic_path_problem} is simply a shortest path problem with arc lengths $\gamma$.
In the other case, where $\gamma\equiv0$, define the arc lengths as $c_a^{{r}/{(r+1)}}$ for all $a\in A$.
Then, the shortest path $P$ with respect to this length together with vector $y$ defined as in~\eqref{eq:y_explicit} is optimal for Problem~\eqref{eq:problem} because
\[
    \argmin_{P\in \mathcal P} \Bigg(\sum_{a\in P}{(c_a/{B}^{\frac{1}{r}})}^{\frac{r}{r+1}} \Bigg)^{\frac{r+1}{r}} =
     \argmin_{P\in \mathcal P} \Bigg(\sum_{a\in P}{c_a}^{\frac{r}{r+1}} \Bigg)^{\frac{r+1}{r}}  =
     \argmin_{P\in \mathcal P} \, \sum_{a\in P}{c_a}^{\frac{r}{r+1}}
\]
since $g(x)={x}^{(r+1)/r}$ is monotonically increasing.

\begin{remark}
\label{remark:irrational}
Notice that potential-based flows frequently involve irrational numbers like, e.g., $c_a^{r/(r+1)}$ above for given rational or even integral input~$c_a$. Since we talk about efficient algorithms in \Cref{thm:poly_time}, we need to be somewhat more careful. For the particular result in \Cref{thm:poly_time}~\eqref{thm:poly_time:a}, it is enough to assume that instead of the conductance costs~$c_a$, we are given arc lengths~$c_a^{r/(r+1)}$ as rational numbers for all arcs~$a\in A$. Notice that our efficient shortest path algorithm then only needs to find a shortest $s$--$t$ path for the given arc lengths.

These issues also play a role in subsequent sections. However, as we only present algorithms for finding approximate solutions there, we can more easily overcome these issues by working with bounded precision in all our computations. To simplify the presentation, in what follows we only work with exact calculations (involving irrational numbers) without going into the rather tedious and less interesting details of bounded precision.
\end{remark}

\subsection[Proof of Theorem 5 (b)]{Proof of \Cref{thm:poly_time}~\eqref{thm:poly_time:b}}
\label{subsec:constrained-shortest-path}

We now turn to the general case where both cost components~$c$ and~$\gamma$ are non-zero.
Building on the reformulation in~\eqref{path_problem}, we derive a fully polynomial-time approximation scheme (FPTAS). The key idea is to approximate the optimal dual variable and find for this approximation a shortest path with bounded effective resistance. In light of \Cref{thm:poly_time}~\eqref{thm:poly_time:a}, we may assume  for the following arguments that $c \not\equiv 0$.

Let $(P^*,y^*)$ be an optimal solution to Problem~\eqref{path_problem}.
We write $P^*_+ = P^* \cap \supp(c)$ for the arcs on the optimal path with non-zero variable costs.
For every $\varepsilon \in (0,1/3]$ we want to find a path $P'\in \mathcal P$ and feasible conductances $y' \in \R^{P'}_{>0}$ such that
\[
\sum_{a\in P'} c_a \,y'_a +\gamma_a \leq (1+\varepsilon) \sum_{a\in P^*} \bigl(c_a \,y^*_a +\gamma_a\bigr).
\]
The algorithm is based on approximating the optimal dual variable~$\lambda^*$.
This dual variable exists for $P^*$ due to \Cref{lem:KKT} such that the following holds:
\begin{equation} \label{eq:KKT_star}
\lambda^* =\frac{c_a \,{(y^*_a) }^{r+1}}{r}  \quad \text{for all~} a\in P^*_+.
\end{equation}
To construct the FPTAS, we proceed as follows. 
First, we derive bounds on the optimal dual variable~$\lambda^*$ to restrict the search space.
Then, for every fixed $\lambda$ within these bounds we define arc conductances and find a path with bounded effective resistance that minimizes the overall cost.
We discretize the range of $\lambda$ to efficiently find a good approximation to $\lambda^*$.
Finally, we prove that the approximation is correct and analyze the run time.

We start with the  bounds on $\lambda^*$.
Due to the equality constraint~\eqref{path_problem:weymouth_pressure} there cannot exist an arc $a'$ on~$P^*$ such that $1/{(y_{a'}^*)}^r > B$.
Therefore, $y_a^*\geq B^{-1/r}$ for all arcs $a\in P^*$.
Together with Equation~\eqref{eq:KKT_star}, this yields the following lower bound:
\[ L \define \min_{a\in A : c_a \neq 0} \frac{ c_a}{r \,\!{B}^{\frac{r+1}{r}}} \leq  \lambda^*.\]
Furthermore, there is an arc $a'\in P^*_+$ with $1/{(y^*_{a'})}^r \geq  B /{\card{P^*_+}}$. Otherwise, $\sum_{a\in P^*_+} 1/{(y_a^*)}^{r} < B$.
Therefore, $y_{a'}^*\leq \card{P^*_+}^{1/r} \big/ B^{1/r}$.
Together with Equation~\eqref{eq:KKT_star}, we obtain the following upper bound on $\lambda^*$:
\[ U \define \max_{a\in A : c_a \neq 0} \; {c_a} \frac{(n-1)^{{\frac{r+1}{r}}}}{r \, B ^{{\frac{r+1}{r}}}}  \geq \frac{c_{a'}\,\card{P^*}^{{\frac{r+1}{r}}}}{r \, B^{{\frac{r+1}{r}}}} \geq  \lambda^*.\]

For a fixed $\lambda$ with $L\leq \lambda \leq U$, we define a corresponding conductance vector $y^\lambda \in \R^A_{\geq 0}$ on the arc set $A_+ \coloneqq \{a \in A : c_a \neq 0\}$ according to~\eqref{eq:KKT} as 
\begin{equation}
    \label{eq:y_lambda}
    y_a^\lambda \define
    \left(\frac{\lambda r}{c_a}\right)^{\frac{1}{r+1}} \quad \text{for } a\in A_+.
\end{equation}
The conductances of the other arcs $\{a \in A : c_a = 0\}$ are implicitly set to $\infty$ meaning that they approach $\infty$ in the infimum.

For fixed conductance $y^\lambda$, let $P^\lambda$ be a cost-minimal $s$--$t$ path such that $(P^\lambda,y^\lambda)$ is a feasible solution to Problem~\eqref{path_problem}.
Then path~$P^\lambda$  solves the following optimization problem:
\[\min_{P\in\mathcal P} \Bigg\{ \sum_{a\in P_+} (c_a y^\lambda_a +\gamma_a)  + \sum_{a \in P \setminus P_+} \gamma_a \, \colon \, \sum_{a\in P_+} 1/{(y_a^\lambda )}^r \leq B \Bigg\}. \]
Eliminating the $y^\lambda_a$ variables by~\eqref{eq:y_lambda}, we reformulate this problem in terms of~$\lambda$:
\begin{equation}%
\label{restr_path_problem}
    g(\lambda) \coloneqq \underset{P\in \mathcal P}{\min} 
    \left\{\sum_{a\in P} \bigl({(\lambda r)}^{\frac{1}{r+1}}\,{c_a}^{\frac{r}{r+1}}  +\gamma_a\bigr)~\colon~
    \sum_{a\in P} {c_a}^{\frac{r}{r+1}} \leq {(\lambda r)}^{\frac{r}{r+1}} B\right\} .
\end{equation}%
Note that in this formulation arcs~$a$ with $c_a = 0$ are treated correctly: regardless of $\lambda$ they contribute only their fixed cost~$\gamma_a$ to the objective, and they do not contribute to the effective resistance bound.

Observe that $P^*$ is an optimal solution to Problem~\eqref{restr_path_problem} for $\lambda = \lambda^*$ and that $\lambda^*$ minimizes~$g$ over $[L,U]$, i.e., $g(\lambda^*) = \min\{g(\lambda) \,\colon \, \lambda \in [L,U]\}$.
The optimal dual variable $\lambda^*$ and corresponding optimal path $P^*$ can therefore be found by minimizing $g$ over $[L,U]$ and solving the optimization problem~\eqref{restr_path_problem} for $\lambda^*$. 
However, as Problem~\eqref{path_problem} is $\mathsf{NP}$-hard, we do not expect that this can be done in polynomial time.

Furthermore, it is $\mathsf{NP}$-hard to compute $g(\lambda)$ because Problem~\eqref{restr_path_problem} defines an instance of the restricted shortest path problem~\cite{GareyJohnson1979} as follows.
For each arc $a\in A$ we are given a cost, in our case $(\lambda r)^{1/(r+1)}c_a^{r/(r+1)}  +\gamma_a$, and a length that in our case corresponds to $c_a^{r/(r+1)}$.
A restricted shortest path is a cheapest $s$--$t$~path in terms of cost, such that the total length does not exceed a given bound, in our case $(\lambda r)^{r/(r+1)}B$.
Several fully polynomial-time approximation schemes, the first one introduced by Hassin~\cite{hassin1992}, were developed for the restricted shortest path problem.
So instead we discretize the search space $[L,U]$ and compute only an approximation of $g$ at these values.
We now describe how this can be done so that the overall minimum yields a good approximation to the minimum of $g$.
Let $\varepsilon \in(0,1]$. 
We discretize the range of $\lambda$ by defining
\[
\lambda_i \define \bigl(1+ \tfrac{\varepsilon}{3}\bigr)^{i(r+1)}L  \quad \text{and} \quad I\define \{i\in \N_{0} \,\colon \, L\leq \lambda_i\leq U\}.
\]
Using an FPTAS for the restricted shortest path problem, we compute a feasible path $P^i \in \mathcal P$ that approximates the optimal solution of Problem~\eqref{restr_path_problem} for $\lambda = \lambda_i$, i.e.,
\begin{equation*} 
    \sum_{a\in P^i} \Bigl({(\lambda_i r)}^{\frac{1}{r+1}}\,{c_a}^{\frac{r}{r+1}} +\gamma_a\Bigr) \leq \bigl(1+ \tfrac{\varepsilon}{3}\bigr) g(\lambda_i).
\end{equation*}
We show now that the discretization of the search space is sufficient to find a good approximation for $\lambda^*$ and $P^*$.
Let $j\in I$ be such that $ \lambda_{j-1}\leq \lambda^*\leq\lambda_j$.
Then, $P^*$ is a feasible solution to Problem~\eqref{restr_path_problem} with $\lambda =\lambda_j$ since $\lambda^*\leq \lambda_j$.
Furthermore, by choice of~$j$ we get 
\[\lambda_j\leq \bigl(1+ \tfrac{\varepsilon}{3}\bigr)^{r+1}\lambda_{j-1} \leq \bigl(1+ \tfrac{\varepsilon}{3} \bigr)^{r+1}\lambda^*.\]
Then $P^j$ is an $(1+\varepsilon)$-approximation to $P^*$ because
\begin{align*} 
    \sum_{a\in P^j} &\bigl(y_a^{\lambda_j}c_a+\gamma_a\bigr) =
    \sum_{a\in P^j} \bigl({(\lambda_j r)}^{\frac{1}{r+1}}\,{c_a}^{\frac{r}{r+1}} +\gamma_a\bigr) 
    \leq \bigl(1+ \tfrac{\varepsilon}{3}\bigr) g(\lambda_j)\\
    &\leq \bigl(1+ \tfrac{\varepsilon}{3}\bigr) \sum_{a\in P^*} \bigl({(\lambda_j r)}^{\frac{1}{r+1}}\,{c_a}^{\frac{r}{r+1}}  +\gamma_a \bigr) 
    \leq \bigl(1+ \tfrac{\varepsilon}{3}\bigr)^{\!2} \sum_{a\in P^*} \bigl({(\lambda^* r)}^{\frac{1}{r+1}}\,{c_a}^{\frac{r}{r+1}}  +\gamma_a \bigr) \\
    & \leq (1+{\varepsilon}) \sum_{a\in P^*} \bigl({(\lambda^* r)}^{\frac{1}{r+1}}\,{c_a}^{\frac{r}{r+1}}  +\gamma_a \bigr)
    = (1+{\varepsilon}) \sum_{a\in P^*} \bigl(y_a^*c_a +\gamma_a\bigr),
\end{align*}
using the definition of $y^{\lambda_j}$ in~\eqref{eq:y_lambda}, that $P^j$ approximates the optimal value $g(\lambda_j)$ of Problem~\eqref{restr_path_problem} for $\lambda=\lambda_j$, that $P^*$ is feasible to this problem, that $\lambda_j$ approximates $\lambda^*$, and, for the last inequality, the fact that $(1+\varepsilon/3)^2 \leq 1+\varepsilon$ for $\varepsilon\leq 1/3$.

Now, consider the minimum cost value of $P^i$ over $I$ and let
\[
i'\in \argmin \Biggl\{\sum_{a\in P^i} \bigl(y_a^{\lambda_i}c_a+\gamma_a\bigr) \,\colon\, i\in I\Biggr\}
\]
be the index of this minimum.
Define $y'\define y^{\lambda_{i'}}$ and  $P'\define P^{i'}$.
Then $(P',y')$ is a $(1+\varepsilon)$-approximation to Problem~\eqref{path_problem}, because
\begin{align*}
    \sum_{a\in P'} \bigl(y_a'c_a+\gamma_a\bigr) \leq \sum_{a\in P^j} \bigl(y_a^{\lambda_j}c_a+\gamma_a\bigr) \leq  (1+{\varepsilon}) \sum_{a\in P^*} \bigl(y_a^*c_a +\gamma_a \bigr).
\end{align*}

We finally turn to the analysis of the running time.
For every $i\in I$ we run an FPTAS for a restricted shortest path instance.
The ratio of the maximal and the minimal arc cost is
\[C\define \frac{\max_{a\in A : c_a \neq 0 }c_a}{ \min_{a\in A : c_a \neq 0}c_a}.\] 
Then $ \card{I} \in O\bigl({\log (n  C)/(\varepsilon r})\bigr)$ since 
\[
\lambda_i = L ( 1+ \varepsilon/3)^{i(r+1)} \leq U \quad \Leftrightarrow \quad  i(r+1) \leq \log_{1+\varepsilon/3 }(U/L) 
\]
and 
\[\log_{1+\varepsilon/3 }(U/L) \leq \frac{\log_{2 }(U/L)}{\log_2(1+\varepsilon /3)} \leq \frac{3 \log_2\bigl((n-1)^{\frac{r+1}{r}} C\bigr)}{\varepsilon},\]
using the fact that $x\leq \log_2(1+x)$ for $x\in [0,1]$. 
Holzmüller shows in~\cite{holzmueller2019} that the run time of Hassin's FPTAS can be improved to $O(\card{A}n/\varepsilon)$ for undirected graphs. 
We can apply his result as the direction of the arcs in $G$ is not relevant in Problem~\eqref{path_problem}.
This yields an overall running time of 
$O\bigl(\abs{A}n (\,\log n +\log C) /({\varepsilon^2}r)\bigr)$.\qedhere

\section{Series-Parallel Network Design with Conductance Bounds}
\label{sec:series-parallel}

The Network Design Problem with conductance bounds ($\bar{y} \in \R_{\geq 0}^A$) is hard to approximate on general graphs, see \Cref{thm:APXhard}.
The problem remains $\mathsf{NP}$-hard on series-parallel graphs as we show in~\Cref{thm:NP_hardness}.
It admits, however, a fully polynomial-time approximation scheme on this graph class if the variable costs are strictly positive. 

In preparation for this result, we first study a variant of the Network Design Problem, where the conductance is fixed. This problem variant has also been considered in~\cite{BornerKLPSS25} for general potential-based flows and in~\cite{ChanLauSchildWongZhou2022} for electrical networks.

Formally, the input consists of a graph $G$ with designated nodes $s$ and $t$, a positive conductance vector $\mu \in \R_{>0}^A$, and a cost vector $p \in \N^A$. 
The goal is to find a cost-minimal subgraph of $G$ such that the effective resistance between $s$ and $t$ is not greater than a given  bound $B>0$.
The problem has the following MINLP formulation:
\begin{samepage}
\begin{subequations}
\label{IPCO_st}
\begin{alignat}{3}
    \underset{x}{\min}        & \quad&&\sum_{a\in A} p_a \,x_a  &\quad & \label{IPCO_st:objective} \\
    \text{s.t.} &     && R^{x,\mu}_{s,t} \leq B,  &\quad & \label{IPCO_st:effRes} \\
                &      && x\in \set{0,1}^A, \label{IPCO_st:binary} & \quad &
\end{alignat}
\end{subequations}
\end{samepage}%
where $R^{x,\mu}_{s,t}$ denotes the effective resistance between $s$ and $t$ in the graph induced by $x$ with arc conductance  $\mu_a$ for every arc $a\in \supp(x)$. 

This problem is a special case of Problem~\eqref{eq:problem} with conductance bounds and vanishing variable cost. 
Formally, we reduce an instance of Problem~\eqref{eq:problem} with $c\equiv0$ to Problem~\eqref{IPCO_st} by setting $\mu_a$ to $\bar{y}_a$ and the cost vector $p$ to the fixed cost vector $\gamma$.
The reduction is valid since any optimal solution $(x,y)$ to Problem~\eqref{eq:problem} with $c\equiv0$ can be modified by increasing the conductance $y_a>0$ of an arc $a$ with $x_a=1$ to $\bar{y}_a$.
This decreases the effective resistance due to \Cref{lem:effResMonotone} so that the modified solution remains feasible and  optimal as the variable cost is zero.

It is straightforward to show that Problem~\eqref{IPCO_st} is $\mathsf{NP}$-hard, even on series-parallel graphs.

\begin{theorem} \label{thm:NP_hardness_bounded}
    Problem \eqref{IPCO_st} is $\mathsf{NP}$-hard even on series-parallel graphs. 
\end{theorem}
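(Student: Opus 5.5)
Our plan is a reduction from Knapsack, the weakly $\mathsf{NP}$-hard problem the introduction already names for series-parallel networks. The natural graph is a bundle of parallel $s$--$t$ arcs. By \Cref{lem:boundsEffResEffCon}(b), the effective conductance of parallel arcs is exactly the sum of their conductances. The constraint $R^{x,\mu}_{s,t}\le B$ is therefore equivalent, via \eqref{problem:effcon}, to the linear covering constraint $\sum_{a} \mu_a x_a \ge D$ with $D=B^{-1/r}$. So Problem~\eqref{IPCO_st} on parallel arcs is exactly a covering (min-cost) knapsack problem.

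Concretely, we take a Knapsack instance with items $i=1,\dots,n$, weights $w_i\in\N$, values $v_i\in\N$, capacity $W$ and target value $K$. The question is whether some $S$ satisfies $\sum_{i\in S}w_i\le W$ and $\sum_{i\in S}v_i\ge K$. We pass to complements: $T=\{1,\dots,n\}\setminus S$ must satisfy
\[
\sum_{i\in T}w_i\ge \sum_i w_i - W \quad\text{and}\quad \sum_{i\in T} v_i\le \sum_i v_i-K .
\]
This is the decision version of min-cost covering knapsack. The instance of Problem~\eqref{IPCO_st} uses $n$ parallel arcs from $s$ to $t$ with $\mu_a=w_i$, $p_a=v_i\in\N$, and $B=(\sum_i w_i-W)^{-r}$, so that $D=\sum_i w_i - W$. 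We may assume $\sum_i w_i>W$, since otherwise the instance is trivial. The answer is ``yes'' iff the optimum is at most $\sum_i v_i-K$. We would also check the degenerate case $x\equiv0$: then $R=\infty$ is infeasible, which is consistent because $D>0$.

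The parallel-arc graph is trivially series-parallel. All data are polynomial in the input size, since $B$ is the rational number $1/(\sum w_i-W)^{r}$ with $r$ a fixed integer. If $r$ is not an integer, we would instead scale to avoid irrationality, or phrase the instance via $D$ as in \Cref{remark:irrational}. The main point to verify carefully is the equality case of \Cref{lem:boundsEffResEffCon}(b) on an arbitrary subset of parallel arcs, together with the equivalence between the $R$- and $C$-constraints; both are immediate from the lemma and the definition $C_{s,t}=R_{s,t}^{-1/r}$. We expect no real obstacle. The only subtlety is ensuring that the encoding of $B$ is polynomial, which we handle by this choice of $B$.
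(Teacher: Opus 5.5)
Your proposal is correct and follows essentially the same route as the paper. Both use a bundle of parallel $s$--$t$ arcs, where the effective conductance equals the sum of the selected conductances, so Problem~\eqref{IPCO_st} becomes the min-knapsack (knapsack cover) problem. The paper simply cites the $\mathsf{NP}$-hardness of knapsack cover; you additionally spell out the complementation reduction from standard Knapsack and make sure the instance has a polynomial-size rational encoding (via $B$, or by rescaling $\mu$ so that $D=1$), which the paper leaves implicit.
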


\begin{proof}
    Consider the graph consisting of $m$ parallel arcs $a_1,\dots, a_m$ each connecting the only two nodes $s$ and $t$. 
    If each arc $a_i$ has cost $p_i$ and conductance $\mu_i$, Problem~\eqref{IPCO_st} yields the following formulation:
    \begin{samepage}
    \begin{alignat*}{3}
        \underset{x}{\min}        & \quad&&\sum_{i=1}^m p_i \,x_i  &\quad &  \\
        \text{s.t.} &     && \sum_{i=1}^m \mu_i \,x_i \geq D,  &\quad &  \\
                    &      && x\in \set{0,1}^m, & \quad &
    \end{alignat*}
    \end{samepage}%
    since the effective conductance of the graph induced by $x$ is given by $C_{s,t}^x= \sum_{i=1}^m \mu_ix_i$ due to the parallel composition of the arcs.
    The problem is also known as Knapsack cover or min-Knapsack problem which is $\mathsf{NP}$-hard, like the classical Knapsack problem.
\end{proof}

We begin with discussing an exact algorithm for Problem~\eqref{IPCO_st} that exploits the structure of the effective resistance of series-parallel graphs.
Furthermore, we explain how this algorithm can be adapted to an FPTAS using classic scaling and rounding techniques.
Finally, we show how these results generalize to a fully polynomial-time approximation scheme for the Network Design Problem~\eqref{eq:problem} with conductance bounds on series-parallel graphs with general cost.

\subsection{An Exact Algorithm}\label{subsec:exact_fixed}

An exact dynamic programming algorithm for Problem~\eqref{IPCO_st} on series-parallel graphs was first proposed by Chan et al.~\cite{ChanLauSchildWongZhou2022} for electrical networks, i.e., the case where $r=1$.
In the following, we extend this algorithm for general potential-based flow networks.

The algorithm exploits the recursive structure to compute the effective resistance of series-parallel graph compositions. 
This composition rule is well known for electrical networks and has been extended to potential-based flows by Gross et al.~\cite{Gross19}.
\begin{lemma}[\cite{Gross19}] \label{lem:effResSPGraphs}
    For $i\in\set{1,2}$, let $(G^{(i)},\mu^{(i)})$ be a potential-based flow network with source node $s^{(i)}$ and sink node $t^{(i)}$.
    Let $R_{i}$ and $C_i$ be the effective resistance and the effective conductance in $G^{(i)}$ between $s^{(i)}$ and $t^{(i)}$, respectively
    \begin{enumerate}
        \item If $G$ is a series-composition of $G^{(1)}$ and $G^{(2)}$, that we obtain by identifying node $t_1$ with $s_2$, then the effective resistance $R_{s,t}$ and effective conductance $C_{s,t}$ between node $s=s_1$ and $t=t_2$ in $G$ are given by
        \[ R_{s,t} = R_1 +R_2 \quad \text{and} \quad C_{s,t} = \frac{C_1\,C_2}{{\Big({C}^r_1+{C}^r_2\Big)}^{1/r}}.\]
        \item If $G$ is a parallel-composition of $G^{(1)}$ and $G^{(2)}$, that we obtain by identifying node $s_1$ with $s_2$ as well as node $t_1$ with $t_2$, then the effective resistance $R_{s,t}$ and effective conductance $C_{s,t}$  between node $s=s_1$ and $t=t_1$ in $G$ are given by
        \[ R_{s,t} = \frac{R_1\,R_2}{{\Big({{R}_1^{1/r}}+{{R}_2^{1/r}}\Big)}^{r}} \quad \text{and} \quad C_{s,t} = C_1+ C_2.\]
    \end{enumerate}
\end{lemma}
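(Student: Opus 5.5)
The plan is to verify the two composition rules directly from the definition of a unit potential-based flow, using its uniqueness. The conductance formulas then follow algebraically from $C = R^{-1/r}$.

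\emph{Series composition.} Let $(\pi^{(i)},f^{(i)})$ be the unit $s^{(i)}$--$t^{(i)}$ flow in $G^{(i)}$, normalized so that $\pi^{(i)}_{t^{(i)}}=0$. On $G$ we define the flow $f$ as $f^{(1)}$ on the arcs of $G^{(1)}$ and $f^{(2)}$ on the arcs of $G^{(2)}$. We define the potential $\pi$ as $\pi^{(2)}$ on $G^{(2)}$ and as $\pi^{(1)}+R_2$ on $G^{(1)}$. This is consistent at the identified node $t_1=s_2$, since there $\pi^{(1)}_{t_1}+R_2=R_2=\pi^{(2)}_{s_2}$.

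We then check the required properties in turn.
\begin{itemize}
\item Potential differences on each arc are unchanged, so the potential-flow equation~\eqref{eq:weymouth_con} holds on every arc.
\item Flow conservation holds at interior nodes of each part. At the glued node, the incoming unit from $G^{(1)}$ (balance $-1$) cancels the outgoing unit into $G^{(2)}$ (balance $+1$). The balances at $s$ and $t$ are $+1$ and $-1$.
\end{itemize}
By uniqueness (Birkhoff--Diaz), this pair is the unit $s$--$t$ flow in $G$. Hence $R_{s,t}=\pi_s-\pi_t=R_1+R_2$. Substituting $R_i=C_i^{-r}$ gives
\[
C_{s,t}=(C_1^{-r}+C_2^{-r})^{-1/r}=\frac{C_1C_2}{(C_1^r+C_2^r)^{1/r}}.
\]

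\emph{Parallel composition.} Here it is easier to work with conductances, via Raber's characterization: $C_{s,t}$ is the value of the potential-based flow with potential difference $1$ between $s$ and $t$ and zero balance elsewhere.

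Take the potential-based flow in each $G^{(i)}$ with $\pi^{(i)}_{s}=1$ and $\pi^{(i)}_{t}=0$ and zero balance at the internal nodes. Its value is $C_i$. Since the two parts share only $s$ and $t$, where the potentials agree, gluing them gives a valid potential vector on $G$. The potential-flow equation holds arcwise, and internal balances remain zero.

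The value of the combined flow is therefore $C_1+C_2$. By uniqueness of the flow with prescribed potentials at $s$ and $t$ (equivalently, scale to a unit flow and invoke Birkhoff--Diaz uniqueness), $C_{s,t}=C_1+C_2$. Converting back gives
\[
R_{s,t}=(R_1^{-1/r}+R_2^{-1/r})^{-r}=\frac{R_1R_2}{(R_1^{1/r}+R_2^{1/r})^r}.
\]

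\emph{Expected obstacle.} The only delicate point is justifying Raber's characterization in the parallel case, or equivalently relating the unit flow to the flow with unit potential difference. I would do this by a scaling argument: if $(\pi,f)$ is the unit flow, then $(\pi/R_{s,t},\,f/R_{s,t}^{1/r})$ satisfies~\eqref{eq:weymouth_con}, because $\pi$ scales by $\alpha$ when $f$ scales by $\alpha^{1/r}$. This rescaled pair has potential difference $1$ and value $R_{s,t}^{-1/r}=C_{s,t}$. Gluing the two rescaled solutions and rescaling back then yields a unit flow on $G$, and the uniqueness theorem identifies it.
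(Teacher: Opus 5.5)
Your proof is correct.

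In the series case, the shifted potential $\pi^{(1)}+R_2$ agrees with $\pi^{(2)}$ at the glued node $t_1=s_2$. The balances $-1$ and $+1$ cancel there, and Birkhoff--Diaz uniqueness with $\pi_t=0$ then gives $R_{s,t}=R_1+R_2$.

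In the parallel case, the homogeneity $(\pi,f)\mapsto(\alpha\pi,\alpha^{1/r}f)$ lets you glue the two solutions that have potential difference $1$ and values $C_1$, $C_2$. Since the two parts meet only at $s$ and $t$, where the potentials agree, the glued pair is valid. Rescaling it to a unit flow and invoking uniqueness identifies $R_{s,t}=(C_1+C_2)^{-r}$. Both algebraic conversions between $R$ and $C$ check out.

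The paper gives no proof of this lemma; it simply imports it from Gross et al. Your gluing-plus-uniqueness argument is therefore a self-contained justification. It uses only facts already stated in the preliminaries: uniqueness of the unit potential-based flow and the scaling behaviour of the potential-flow equation. Your scaling remark in fact derives the part of Raber's characterization you need, so you do not have to cite it separately.
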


For the algorithm, we use a tree representation of a series-parallel graph $G$, referred to as an \emph{SP-tree} of $G$, see, e.g.,~\cite{BrandtstaedtVanBangSpinrad1999}.
An SP-tree~$T$ is a rooted full binary tree in which each node~$v$ represents a connected subgraph $G_v$ of $G$, with the root corresponding to the entire graph~$G$.
Every inner node $v$ of the tree corresponds to either a parallel or a series composition of the two subgraphs represented by the two children of $v$, while the leaves of $T$ represent the single arcs in $G$.
Since $T$ is a full binary tree with $m\define \card{A}$ leaves, it has exactly $2m-1$ nodes.
Every series-parallel graph admits such an SP-tree, which can be computed in linear time using the algorithm proposed by Valdes et al.~\cite{ValdesTarjanLawler1982}. 

The effective resistance of the graph $G$ can be computed by processing the SP-tree from bottom to top according to \Cref{lem:effResSPGraphs}. 
We denote the effective resistance between~$s_v$ and~$t_v$ in $G_v$ by~$R_v$ and abbreviate series and parallel composition by \emph{SC} and \emph{PC}, respectively.
The effective resistance of $G$ is then computed recursively by the following rules:
\begin{align*}
    R_{v} &= \frac{1}{\mu_a^r } && \text{if $v$ is a leaf representing arc $a\in A$,} \\
    R_{v} &= {R_u}+{R_w} && \text{if $G_v$ is an SC of $G_u$ and $G_w$, }\\
    R_{v} &= \frac{R_u\,R_w}{{\Big({{R}_u^{1/r}}+{{R}_w^{1/r}}\Big)}^{r}} && \text{if $G_v$ is a PC of $G_u$ and $G_w$. }   
\end{align*}

Let $U$ be an upper bound on the optimal solution to Problem~\eqref{IPCO_st}, e.g., $U =\sum_{a\in A} p_a$.
For the dynamic program, we define for every node $v$ in the SP-tree of $G$ and every possible cost $k\in\{0,1,\dots,U\}$ the following subproblem
\begin{align}
    R(v,k) = \min_{H\subseteq G_v}\left\{ R^{H}_{s_v,t_v} \,\colon \, \sum_{a\in H}p_a \leq k\right\},
\end{align}
which computes the minimum effective resistance between $s_v$ and $t_v$ over every subgraph of $G_v$ with arc cost of at most $k$.

Let $v_0$ denote the root node of the SP-tree.
The optimal value of Problem~\eqref{IPCO_st} is then given by 
\[ \min_{k\in \{1,\dots,U\}}\{k\,\colon\, R(v_0,k)\leq B\}. \]
With the following recurrence, we compute $R(v,k)$ from the leaves of the tree to the root:
\begin{align*} R(v,k) = \begin{cases}
    \infty & \text{if $v$ is a leaf and $k<p_a$},\\
    R_{v} & \text{if $v$ is a leaf and $k\geq p_a$},\\
    \underset{k'\in \{0, \dots,k\}}{\min} R(u,k') + R(w,k-k') & \text{if $G_v$ is an SC of $G_u$ and $G_w$},\\
    \underset{k'\in \{0, \dots,k\}}{\min} \frac{R(u,k')\,R(w,k-k')}{{\Big({{\big(R(u,k')\big)}^{1/r}}+{{\big(R(w,k-k')\big)}^{1/r}}\Big)}^{r}} & \text{if $G_v$ is a PC of $G_u$ and $G_w$}.
\end{cases}
\end{align*}
Here are $O(mU)$ subproblems, since the SP-tree has $O(m)$ nodes and we consider $U+1$ values of $k$ for each node.
Solving each subproblem takes $O(U)$ time since we enumerate all $k'\in \{0,\dots,k\}$, so that the overall time complexity of the dynamic program is $O(mU^2)$.

\subsection{FPTAS for the Network Design Problem with Discrete Conductance} \label{subsec:FPTAS_fixed}

Based on the dynamic program, we obtain a fully polynomial-time approximation scheme for arbitrary $p \in \R_{\geq 0}^A$ by a classic scaling and rounding ansatz.
Therefore, let $\varepsilon\in (0,1]$.
For a feasible solution $x$, let $p(x)= \sum_{a\in A}p_a x_a$.

Let $x^*$ be an optimal solution, and let $p_{\max} \define \max\{p_a : a\in \supp{(x^*)}\}$. 
Then $p_{\max}$ is a lower bound on the optimal value, while $U=m\, p_{\max}$ is an upper bound, i.e., $p_{\max} \leq p(x^*) \leq m \,p_{\max}$.

To obtain a polynomial running time, we scale the cost as follows 
\[\rho_a\define \floor{\frac{p_a}{\delta}} \, \forall a\in A \quad \text{with} \quad \delta \define \frac{\varepsilon \, p_{\max{}}}{m},\]
so that $p_a-\delta \, \rho_a \leq \delta$ and  $\delta \, \rho_a \leq p_ a$ for all arcs $a\in A$.
Let $x'$ be the solution returned by the dynamic program for $G$ with scaled cost vector $\rho$.
As $x^*$ is a feasible solution to the problem with scaled cost, it is $\rho(x')\leq \rho(x^*)$.

Then, $x'$ is a $(1+\varepsilon)$-approximation to Problem~\eqref{IPCO_st} because
\begin{align*}
    p(x')   & = p(x')-\delta \, \rho(x') + \delta \, \rho(x') \\
            & \leq m\, \delta + \delta \, \rho(x^*) \\
            & \leq \varepsilon \, p_{\max{}} + \,p(x^*) \\
            & \leq (1+\varepsilon) p(x^*).
\end{align*}

With the scaled cost and upper bound $U'=m \floor{{p_{\max}}/{\delta}} \leq  m^2/  \varepsilon$ , the dynamic program runs in time $O(m(U')^2)= O(m^5/{\varepsilon ^2})$.
However, $p_{\max}$ is not known a priori.
We therefore enumerate all possible values $p_{\max}= p_a$  and run the  dynamic program with scaled cost  for every $a\in A$.
Finally, we compare the objective value of every enumerated solution and return the minimum.
This yields in an overall running time of $O(m^6/\varepsilon ^2)$.

\subsection{FPTAS for the General Network Design Problem}
The exact algorithm and the resulting FPTAS described in \Cref{subsec:exact_fixed} and \Cref{subsec:FPTAS_fixed} can be easily adapted to the case where there is not only one choice for the conductance of each arc, but a number of conductance values if this number is polynomially bounded in the input size.

We now extend this FPTAS for the discrete problem to the general Problem~\eqref{eq:problem} in the case $c\in \R^A_{>0}$ with conductance bounds and variable conductance on series-parallel graphs.
The key idea is to discretize the conductance interval $[0,\bar{y}_a]$ for every arc $a\in A$ into sufficiently many values, thereby obtaining an instance of Problem~\eqref{IPCO_st} with multiple conductance options.

\begin{restatable}{theorem}{SPFPTAS} \label{thm:SPFPTAS}
    The Network Design Problem~\eqref{eq:problem} with $\bar{y} \in \mathbb{R}_{\geq 0}$ and $c\in \mathbb R^A_{>0}$ admits a fully polynomial-time approximation scheme for series-parallel graphs.
\end{restatable}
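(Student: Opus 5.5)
The plan is to reduce to the multi-option discrete problem of \Cref{subsec:FPTAS_fixed}. Each arc gets polynomially many candidate conductances, and each candidate is charged its full cost $\gamma_a + c_a y$. The reduction is lossy in only one direction: we round the optimal conductances \emph{up} to the next grid value. This keeps the solution feasible by \Cref{lem:effResMonotone}, and it costs at most a factor $(1+\varepsilon)$ in the variable cost. The difficulty is that the interval $[0,\bar y_a]$ has no positive lower end. We therefore have to choose a lower threshold $\ell_a$ below which we simply round the conductance up to $\ell_a$, and we must show that the additive cost of doing so is small relative to the optimum.

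\textbf{Step 1 (guessing a scale).} Fix an optimal solution $(x^*,y^*)$. We guess the arc $a^\circ\in\supp(y^*)$ maximizing $c_a y^*_a + \gamma_a$, giving $m$ guesses. The optimum $\mathrm{OPT}$ then satisfies $M\le \mathrm{OPT}\le mM$, where $M = c_{a^\circ}y^*_{a^\circ}+\gamma_{a^\circ}$. Since $y^*$ is unknown, we instead guess $M$ up to a factor $2$ within a polynomially bounded range. One option is to take the candidates $\gamma_a + c_a\bar y_a\cdot 2^{-j}$ together with the plain values $\gamma_a$.

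I expect the main obstacle here. The range of $y^*_a$ relative to $\bar y_a$ may not be polynomially bounded, so it is cleaner to use a lower bound on $\mathrm{OPT}$ directly. Every feasible solution must have $C^{y}_{s,t}\ge D$, and by \Cref{lem:boundsEffResEffCon}(b) this forces $\sum_{a\in\supp(y)} c_a y_a \ge (\min_a c_a)\,D$. Here $c>0$ is used essentially. Conversely, $U_0=\sum_a(\gamma_a+c_a\bar y_a)$ is an upper bound whenever the instance is feasible. The ratio $U_0/L_0$ is exponential only in the input encoding, so geometric guessing of $\mathrm{OPT}$ within a factor $2$ takes $O(\log(U_0/L_0))$ rounds, which is polynomial. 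For each guess $\Theta$ with $\Theta\le \mathrm{OPT}\le 2\Theta$, the remaining steps produce a solution.

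\textbf{Step 2 (discretization for a fixed guess $\Theta$).} For each arc $a$, restrict to conductances with $c_a y\le 2\Theta$, so $y\le \min\{\bar y_a, 2\Theta/c_a\}$. The grid consists of the values $y = \min\{\bar y_a,\,(\varepsilon\Theta/(m c_a))(1+\varepsilon)^k\}$ for $k=0,\dots,K$, with $K = O(\log(m/\varepsilon)/\varepsilon)$. Each such candidate carries cost $\gamma_a + c_a y$. Rounding each $y^*_a>0$ up to the smallest grid value at least $y^*_a$ works as follows. If $y^*_a$ lies below the first grid point, the increase in cost is at most $\varepsilon\Theta/m$. Otherwise the cost increases by at most a factor $(1+\varepsilon)$. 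The cap $\bar y_a$ is itself a grid point, so the rounded values stay within the bounds. Summing over arcs, the rounded solution is feasible by \Cref{lem:effResMonotone} and has cost at most $(1+\varepsilon)\mathrm{OPT} + \varepsilon\Theta \le (1+2\varepsilon)\mathrm{OPT}$.

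\textbf{Step 3 (solving the discrete instance).} We run the multi-option version of the FPTAS from \Cref{subsec:FPTAS_fixed}. At a leaf of the SP-tree, $R(v,k)$ becomes the minimum of $1/y^r$ over all options of cost at most $k$. The internal recurrences of \Cref{lem:effResSPGraphs} are unchanged. The scaling argument goes through with $p_{\max}$ now ranging over the polynomially many (arc, option) costs. The discrete solver returns a $(1+\varepsilon)$-approximation to the best grid solution, which gives a $(1+\varepsilon)(1+2\varepsilon)$-approximation overall. We take the best solution over all guesses and rescale $\varepsilon$. The running time is polynomial in $m$, $1/\varepsilon$, and the input encoding length, where the encoding length enters through the $O(\log(U_0/L_0))$ guesses. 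Irrational grid values are handled with bounded precision, as in \Cref{remark:irrational}.
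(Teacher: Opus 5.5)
Your proposal is correct and follows essentially the paper's route. Both arguments lower-bound the optimum via $D \le C^{y^*}_{s,t} \le \sum_a y^*_a$ (\Cref{lem:boundsEffResEffCon}), build a geometric conductance grid starting at a threshold of order $\varepsilon\cdot(\text{lower bound})/(m c_a)$, round the optimal conductances up (feasible by \Cref{lem:effResMonotone}, at an additive $\varepsilon$-fraction plus multiplicative $(1+\varepsilon)$ loss), and finish with the multi-option scaling FPTAS on the SP-tree. The only difference is technical: you guess $\mathrm{OPT}$ within a factor $2$ and cap each grid at $2\Theta/c_a$, giving $O(\log(m/\varepsilon)/\varepsilon)$ options per arc per guess, whereas the paper uses the single bound $L=\min_a(c_aD/m+\gamma_a)$ and lets each grid run up to $\bar y_a$, with $O(\tfrac{1}{\varepsilon}\log(\bar y_a c_a m/(\varepsilon L)))$ options per arc.
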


\begin{proof}
    The proof proceeds as follows.
    First, we derive a lower bound on the optimal value of Problem~\eqref{eq:problem}.
    Based on this bound, we construct a lower bound on the conductance of each arc, ensuring that every selected arc contributes a non-negligible fraction of the optimal value.
    To obtain an instance of Problem~\eqref{IPCO_st} with multiple conductance options, we discretize the feasible conductance interval of each arc into polynomially many values.
    We show that the discretization introduces only a small multiplicative error.
    More precisely, we prove that the optimum of the discretized instance is a $(1+\varepsilon/3)$-approximation to Problem~\eqref{eq:problem}. 
    Applying the FPTAS described in \Cref{subsec:FPTAS_fixed} with parameter $\varepsilon/3$ to the discretized instance yields a desired $(1+\varepsilon)$-approximation to Problem~\eqref{eq:problem}.
    
    Let $(G,c,\gamma,\bar{y},D)$ be an instance of Problem~\eqref{eq:problem} using constraint formulation~\eqref{problem:effcon}, where $G=(V,A)$ is a series-parallel graph with source  $s\in V$ and sink  $t\in V$, and let $m\define\abs{A}$. 

    We first derive a lower bound on the optimal objective value.
    Let $(x^*,y^*)$ be an optimal solution. 
    Using the constraint~\eqref{problem:effcon} and \Cref{lem:boundsEffResEffCon}, we obtain
    \[ D\leq C_{s,t}^{y^*} \leq \sum_{a\in A} y^*_a. \]
    Hence, there exist an arc $a^*\in A$ such that $\frac{D}{m} \leq y^*_{a^*}$.
    This implies the follower lower bound on the objective: 
    \[L\define \min_{a\in A} c_{a}\frac{D}{m}+\gamma_{a} \leq \sum_{a\in \supp(x^*)} c_ay^*_a+\gamma _a. \]

    Let $\varepsilon\in (0,1)$ be given.
    For each arc $a\in A$, we define a lower bound on the conductance by
    \[
        \underline{y}_a \define \frac{\varepsilon \, L}{6 \,c_a m}
    \]
    This choice ensures that the cost of an arc with $c_a\neq 0$ is at least $c_a\underline{y}_a= \varepsilon L/(6m)$. 

    We now discretize the conductance interval for each arc $a\in A$.
    For $i\in \N_0$,  define 
    \[
        \mu_{a_i} \define \bigl(1+ \tfrac{\varepsilon}{6}\bigr)^{\! i} \underline{y}_a \quad \text{and} \quad I_a \define \set{i\in \N_0 \suchthat \underline{y}_a \leq \mu_{a_i} \leq \bar{y}_a} .
    \]
    In addition, we include the value $\mu_{a_{-1}}\define \bar{y}_a$ for every arc~$a$. 
    The number of discrete conductance values $\abs{I_a}$ is polynomially bounded in the input size because
    \[ \mu_{a_i} = \bigl(1 + \tfrac{\varepsilon}{6}\bigr)^{\! i}\underline{y}_a \leq \bar{y}_a \quad \Leftrightarrow \quad i \leq \log_{1+\varepsilon/6}(\bar{y}_a/\underline{y}_a) \leq \frac{6 }{\varepsilon} \log_2\left(\frac{\bar{y}_a 6\,c_a m}{\varepsilon L}\right).\]

    We construct an instance of Problem~\eqref{IPCO_st} by assigning for each arc~$a\in A$ the conductance option $\mu_{a_i}$ with corresponding cost 
    \[
        p_{a_i}\define c_a\, \mu_{a_i} +\gamma_a \quad \text{for all }i\in I_a\cup\set{-1}.
    \]
    This yields an instance of polynomial size.
    Let $(x',\mu')$ be an optimal solution to this discretized instance.
    We compare this solution with the optimal solution $(x^*, y^*)$ of the original problem. 
    For each arc $a \in \supp(x^*)$, let $i^*$ be the smallest index such that $y^*_a \le \mu_a^{i^*}$. If no such index exists, we set $i^* = -1$.
    Then the solution obtained from $(x^*, y^*)$ by replacing $y^*_a$ with $\mu_a^{i^*}$ is feasible for the discretized problem, since increasing the conductance can only decrease the effective resistance; see \Cref{lem:effResMonotone}.

    We now bound the increase in the objective value. Let
    \[
        S\define \set{ a\in \supp(x^*) \suchthat y_a^*\leq \underline{y}_a}
    \]
    be the set of optimal arcs with small conductance value.
    Then, for $a\notin S$, we have 
    \[
         \mu_{a_{i^*}} - y^*_a \leq  \frac{\varepsilon}{6} y^*_a
    \]
    while for $a\in S$, we have $\mu_{a_{i^*}}= \underline{y}_a$. 
    Therefore,
    \begin{align*}
        \sum_{a\in \supp(x')} p_{a_{i'}} - \sum_{a\in \supp(x^*)} c_ay^*_a+\gamma _a & \leq \sum_{a\in \supp(x^*)} p_{a_{i^*}} - \sum_{a\in \supp(x^*)} c_ay^*_a+\gamma _a \\
        & = \sum_{a\in \supp(x^*)} c_a(\mu_{a_{i^*}}-y^*_a) \\
        & = \sum_{a\in S} c_a(\underline{y}_a-y^*_a) + \sum_{a\notin \ S} c_a(\mu_{a_{i^*}}-y^*_a) \\
        & \leq \sum_{a\in S} c_a\underline{y}_a + \frac{\varepsilon}{6}\sum_{a\notin S} c_ay^*_a \\
        & = \abs{S} \frac{\varepsilon \,L}{6\, m} + \frac{\varepsilon}{6}\sum_{a\notin S} c_ay^*_a \\
        & \leq   \frac{\varepsilon }{6} L + \frac{\varepsilon}{6}\sum_{a\notin  S} c_ay^*_a \\
        & \leq \frac{\varepsilon }{3} \sum_{a\in \supp(x^*)} c_ay^*_a+\gamma _a,
    \end{align*}
    where we use the definition of $\underline{y}_a=\frac{\varepsilon \, L}{6\,c_am}$ and the fact that $L$ and $\sum_{a\in \supp(x^*)\setminus S} c_ay^*_a$ are lower bounds on the optimal value.

    Thus, the discretized solution $(x',\mu')$ is a $(1+\varepsilon/3)$-approximation of the original problem.
    Finally, we apply the FPTAS from \Cref{subsec:FPTAS_fixed} with accuracy $\varepsilon/3$ to the discretized instance.
    This introduces an additional factor of $(1+\varepsilon/3)$, resulting in an overall approximation factor of $(1+\varepsilon)$.
\end{proof}

\section{Complexity Results}
\label{sec:complexity}

In this final section, we discuss the computational complexity of the general Network Design Problem and its variants.
We first provide an approximation-preserving reduction from the Steiner tree problem which implies (using the non-approximability result of Chlebík and Chlebíková~\cite{ChlebikC08}) that the Network Design Problem is $\mathsf{NP}$-hard to approximate with a factor of $96/95$ and, in particular, does not admit a PTAS. 

\begin{restatable}{theorem}{APXhard} \label{thm:APXhard}
    For $r\geq 1$ there is no polynomial-time approximation scheme for the Network Design Problem~\eqref{eq:problem}, unless $\mathsf{P} = \mathsf{NP}$, even in case $c\equiv 0$ and $\bar{y}\in \R_{\geq 0}^A$.
\end{restatable}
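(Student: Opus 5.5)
We reduce from the Steiner tree problem in graphs, which by Chlebík and Chlebíková~\cite{ChlebikC08} is $\mathsf{NP}$-hard to approximate within $96/95$. The reduction will be approximation preserving up to a factor $1+\delta$ for arbitrarily small $\delta>0$; hence a PTAS for Problem~\eqref{eq:problem} with $c\equiv0$ would give a PTAS for Steiner tree.

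\textbf{Construction.} Let a Steiner instance be given by an undirected graph $H=(W,E)$ with edge costs $w_e\geq0$ and terminals $T\subseteq W$, $|T|=k$. Pick a root terminal $t_0\in T$. We build $G$ from $H$ by replacing each edge by an arc of arbitrary orientation (direction is irrelevant for potential-based flows), with fixed cost $\gamma_e=w_e$ and huge conductance bound $\bar y_e=M$. We add a new source $s$, let the sink be $t\define t_0$, and add for every terminal $q\in T\setminus\{t_0\}$ an arc $(s,q)$ with $\gamma=0$ and conductance bound $\bar y=1$. Set $c\equiv0$. As in the reduction to Problem~\eqref{IPCO_st}, every selected arc may take its maximal conductance by \Cref{lem:effResMonotone}. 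The bound $B$ is chosen so that $C^y_{s,t}\geq D$ holds essentially iff all $k-1$ terminal arcs carry flow into a component containing $t_0$. Concretely, we take $D$ slightly below $k-1$ and just above $k-2$.

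\textbf{Correctness.} Suppose $S$ is a Steiner tree. Selecting $S$ together with all terminal arcs gives a network in which each unit-conductance arc $(s,q)$ is in series with a path of conductance at least $M/n$ inside $S$. By the composition rules of \Cref{lem:effResSPGraphs} and monotonicity, the effective conductance is at least $(k-1)$ times a factor $\to1$ as $M\to\infty$; a crude bound via the series rule along one tree path per terminal suffices. It therefore exceeds $D$, and the cost is $w(S)$. Conversely, take any feasible $y$. By \Cref{lem:boundsEffResEffCon}(b) applied to the cut $\{s\}$, we have $C^y_{s,t}\leq$ the total conductance of those terminal arcs $(s,q)$ whose terminal $q$ is connected to $t_0$ in the selected network (arcs leading into components without $t$ carry no flow, since such a component is a dead end). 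Hence $D>k-2$ forces every terminal to be connected to $t_0$ via selected original arcs, so the selected original arcs contain a Steiner tree of cost at most the solution cost. Thus optimal values coincide and solutions map back with cost not increasing, which is an approximation-preserving reduction.

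\textbf{Main obstacle.} The delicate step is making the first direction rigorous. A tree is not series-parallel with respect to $s,t$, so we cannot simply compose. We must bound $C^y_{s,t}$ from below by $(1-\delta)(k-1)$ using only Thomson's principle~\eqref{thomson}: route $1/(k-1)$ units through each terminal arc and then along the tree path to $t_0$. Each tree arc carries at most one unit and has resistance $M^{-r}$, so $R\leq (k-1)\cdot(k-1)^{-(r+1)}+n M^{-r}$, which gives $C\geq (k-1)(1-\delta)$ for $M$ polynomially large. This fixes the choice $D\define(k-1)(1-\delta)$ with $\delta<1/(k-1)$. The remaining work is to ensure these numbers have polynomial encoding, since $B=D^{-r}$ is rational for fixed integer $r$, or is approximated otherwise.
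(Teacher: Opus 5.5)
Your proposal is correct and is essentially the paper's reduction from Steiner tree, mirrored: your auxiliary node is the source with unit-conductance terminal arcs, whereas the paper's is the sink with conductance $1/m$. The cost correspondence is in fact exact, not just up to $1+\delta$. The only differences are in the verification: in the forward direction you bound $R^y_{s,t}$ via Thomson's principle~\eqref{thomson} with an explicit flow, which with the paper's conductances $1/m$ and $nm$ gives exactly its threshold $B=1+\frac{n-1}{n^rm^r}$, whereas the paper argues that some terminal has potential at most $1$; in the converse, your dead-end claim is most cleanly replaced by taking the cut consisting of $s$ together with all components not containing $t_0$, which is the same cut-capacity bound $C^y_{s,t}\le\sum_{a\in\delta(S)}y_a$ the paper invokes.
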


\begin{proof} 
    We present an approximation-preserving reduction from the Steiner Tree Problem.
    Given an instance of the Steiner Tree Problem, that is, a graph~$G=(V,E)$ with terminal nodes~$\{t_0,t_1,\dots,t_m\}\subseteq V$ and arc costs~$\gamma \in \N_{0}^E$, we construct an instance of the Network Design Problem as follows.
    
     Let $G'=(V\cup \{t\}, A\cup \{(t_i,t)\,\colon \, i\in [m] \})$ be the graph obtained from $G$ by connecting an additional node $t$ to all terminal nodes except the terminal $s\define t_0$ and choosing an arbitrary orientation~$A$ of the edges in~$E$.
    We assign the edges incident to $t$ a fixed cost of zero, while we retain the fixed cost of the other edges from the Steiner tree instance.
    The variable cost vector $c$ is set to the zero vector.
    For the upper bounds of the conductance, we give a large bound for the arcs in $G$, namely $\bar{ y}_a =n \, m $ for $a\in A$, where $n=\abs{V}$, and a small bound for the new arcs, that is $\bar{y}_a =1/m$ for $a$ incident to~$t$.
    We set the bound on the effective resistance to $B =1+\frac{n-1}{n^r m^r}$.

    We claim that there exists a solution to the Steiner Tree Problem if and only if there exists a feasible solution of the same cost to the Network Design instance constructed above. 
    To show this, let $H\subseteq G$ be a feasible solution to the Steiner Tree Problem, i.e., $H$ is a connected subgraph of $G$ containing all terminal nodes. 
    To construct a feasible solution~$(x,y)$ to the Network Design Problem, we set~$y_a=n\,m$ and~$x_a=1$ for all arcs~$a$ in~$H$ and~$y_a=1/m$ as well as $x_a=1$ for all arcs~$a$ incident to node~$t$.
    For all the remaining arcs in $G$ that are not in $H$, we set $y_a = x_a = 0$.
    To prove that $(x,y)$ is feasible, we show that the potential at node~$s$ does not exceed the value~$B$ if we send one unit of a potential-based flow from~$s$ to $t$ in the network induced by $y$.
    Let $(\pi,f)$ be the unit potential-based $s$--$t$ flow in the network induced by $y$ with zero potential at node $t$.
    We denote by $\pi_i\define \pi_{t_i}$ the potential at node $t_i$, and by $f_i\geq 0$ the flow from $t_i$ to $t$ for every $i\in [m]$.
    
    We claim that there exists at least one $j\in [m]$ such that $\pi_j\leq 1.$
    For the sake of a contradiction, assume that $\pi_i > 1$ for every $i$.
    This implies that the incoming flow at $t$ must be greater than $1$ because by the potential-flow equation together with the definition of $y$, we have
    \begin{align*}
        \sum_{i=1}^m f_i = \sum_{i=1}^m \frac{1}{m}{\pi}_i^{1/r} > 1,
    \end{align*}
    contradicting the flow conservation constraint at node $t$.
    Therefore, let $t_j$ be the terminal with potential $\pi_j\leq 1$ and let $P$ be an $s$--$t_j$ path in $H$.
    Every arc in $P$ has conductance $n\,m$ and can transport a flow of at most $1$. By the telescoping sum of the potential differences along each arc in $P$, we obtain
    \begin{align*}
        \pi_s -\pi_j = \sum_{a\in P}\frac{1}{n^rm^r}\abs{f_a}^r \leq \frac{n-1}{n^rm^r},
    \end{align*}
    which, together with $\pi_j \leq 1$, implies that $\pi_s\leq B$, so that $(x,y)$ is feasible.

    For the other direction, let $(x,y)$ be a feasible solution to the Network Design Problem constructed at the beginning of the proof, and let $H'\subseteq G'$ be the graph induced by $x$.
    Let $H= H'\cap G$ contain only nodes and arcs of $G$. 
    The graph $H$ contains  therefore all nodes and edges of $H'$ except the node $t$ and its incident edges, and has the same objective value as the solution $(x,y)$ by the definition of the fixed cost.
    We show that $H$ is a feasible solution to the Steiner tree problem.
    
    Without loss of generality, assume that $y_a=\bar{y}_a$ for all arcs with $x_a=1$ because there is no variable cost and increasing the conductance will only decrease the effective resistance, see \Cref{lem:effResMonotone}.
    Note that $H'$ must connect nodes $s$ and $t$, otherwise the effective resistance would be infinite.
    Furthermore, we can assume that $H'$ is connected; otherwise, removing the components of~$H'$ that do not contain $s$ and $t$ provides a feasible solution to the network design problem with possibly reduced cost.
    It is shown in~\cite{BornerKLPSS25} that for every $s$--$t$ cut $S$ in~$H'$ the sum of the conductances of the edges crossing this cut  must be at least $1/{B}^{1/r}$, or equivalently,
    \begin{equation} \label{eq:IPCO_cut}
        \frac{1}{\Big( \sum_{a\in \delta(S)} y_a \Big )^r} \leq {B} =1 + \frac{n-1}{n^rm^r} \quad \text{for all $s$--$t$ cuts $S$ in $H'$}.
    \end{equation}
    What remains to show is that the graph $H$ is connected and contains all terminal nodes. 
    If there exists more than one connected component in $H$, there must be one that contains at least one terminal node of $t_1,\dots, t_m$ but does not contain node $s$. 
    Otherwise, the graph~$H'$ would not be connected.
    Let $S\subseteq V$ be the component of $H$ not containing $s$ but at least one other terminal node.
    Then $S\cup\{t\}$ is an $s$--$t$ cut in $H'$ and if $S\neq \emptyset$.
    $\sum_{a\in \delta(S\cup\{t\})} y_a \leq \frac{m-1}{m}$
    Thus, for $r\geq 1$, we have
    \begin{align*}
        \frac{1}{\Big( \sum_{a\in \delta(S\cup\{t\})} y_a \Big )^r} & \geq  \left(\frac{m}{m-1}\right) ^r \\
        & = 1+ \frac{m^r-(m-1)^r}{(m-1)^r} \\
        & \geq 1+ \frac{1}{(m-1)^r} \\
        & > 1 + \frac{n-1}{n^rm^r},
    \end{align*}
    which contradicts \eqref{eq:IPCO_cut}.
\end{proof}

We next show that the Network Design Problem remains $\mathsf{NP}$-hard on series-parallel graphs, both when $\bar{y} \equiv \infty$ and when $\bar{y} \in \R_{\geq 0}^A$ and even when $r=1$.

\begin{restatable}{theorem}{NPhardnessUnbounded}
    \label{thm:NP_hardness}
    The Network Design Problem~\eqref{eq:problem} is $\mathsf{NP}$-hard on series-parallel graphs even for $r=1$ both when $\bar{y} \equiv \infty$ and when $\bar{y} \in \R_{\geq 0}^A$.
\end{restatable}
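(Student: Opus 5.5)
The bounded case $\bar y\in\R_{\geq 0}^A$ follows from the reduction already given in this section. \Cref{thm:NP_hardness_bounded} shows that Problem~\eqref{IPCO_st} is $\mathsf{NP}$-hard on a graph of parallel $s$--$t$ arcs, and its proof works verbatim for $r=1$. Problem~\eqref{IPCO_st} is the special case of Problem~\eqref{eq:problem} with $c\equiv 0$, $\gamma=p$ and $\bar y=\mu$. A bundle of parallel arcs is series-parallel, so this settles the bounded case via Knapsack cover.

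For the unbounded case $\bar y\equiv\infty$, I will use the path reformulation~\eqref{quadratic_path_problem}. With $r=1$ it reads
\[
\min_{P\in\mathcal P}\Big(\sum_{a\in P}c'_a\Big)^2+\sum_{a\in P}\gamma_a,\qquad c'_a=(c_a/B)^{1/2}.
\]
I reduce from Partition. Let $a_1,\dots,a_n\in\N$ be given, let $T=\sum_i a_i$, and let $h=T/2$. Build a chain of nodes $s=v_0,v_1,\dots,v_n=t$. Between $v_{i-1}$ and $v_i$ place two parallel arcs:
\begin{itemize}
\item a ``variable'' arc with $c_a=a_i^2$ and $\gamma_a=0$;
\item a ``fixed'' arc with $c_a=0$ and $\gamma_a=2h\,a_i=T a_i$.
\end{itemize}
Set $B=1$, so that $c'_a=a_i$ on the variable arc and $c'_a=0$ on the fixed arc. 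All data are integral, and the graph is series-parallel: a series composition of $n$ parallel pairs.

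Every $s$--$t$ path picks exactly one arc per pair, so paths correspond to subsets $S\subseteq[n]$, where $S$ is the set of indices whose variable arc is chosen. Writing $X=\sum_{i\in S}a_i$, the objective becomes
\[
X^2+2h(T-X)=(X-h)^2+2hT-h^2=(X-h)^2+\tfrac34T^2 .
\]
Hence the optimal value of Problem~\eqref{eq:problem} on this instance is at most $\tfrac34T^2$ if and only if some subset has sum exactly $T/2$, i.e., the Partition instance is a yes-instance.

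The step I expect to need the most care is justifying that the optimum of Problem~\eqref{eq:problem} equals the path value above. There are two points to check:
\begin{itemize}
\item The existence lemma for optimal path solutions in \Cref{subsec:path} gives an optimal path solution, and the derivation of~\eqref{quadratic_path_problem} gives its exact cost. That derivation treats arcs with $c_a=0$ as carrying arbitrarily large conductance, so their contribution to the resistance vanishes only in the infimum. This is consistent with the problem being stated with $\inf$, so the threshold comparison with $\tfrac34T^2$ is valid for the infimum.
\item If one prefers strictly positive $c$, I would instead give the fixed arcs a tiny $c_a$ and use a slightly perturbed threshold, using integrality of $(X-h)^2$ for $X\neq h$ (it is then at least $\tfrac14$) to keep a gap.
\end{itemize}
Since Partition is $\mathsf{NP}$-hard, the unbounded case follows.
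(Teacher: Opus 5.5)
Your proof is correct and follows the paper's route. The bounded case goes through the Knapsack-cover instance of \Cref{thm:NP_hardness_bounded}, and the unbounded case is a Partition reduction on a series composition of $n$ parallel pairs, evaluated via the path reformulation~\eqref{quadratic_path_problem}.

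The one substantive difference is the choice of fixed costs, and there your version is the one that actually separates yes- from no-instances. Your linear term $-2hX$ is the tangent of $X^2$ at $X=h$, so the objective is $(X-h)^2+\tfrac34T^2$ and the threshold is reached only at $X=h$.

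The paper, with its $T=\tfrac12\sum_i a_i$, uses the coefficient $T^{1/r}$ instead. A path then has objective $X^{(r+1)/r}-T^{1/r}X+2nT^{(r+1)/r}$. Since $X^{(r+1)/r}-T^{1/r}X=X\bigl(X^{1/r}-T^{1/r}\bigr)\le 0$ for all $0\le X\le T$, the all-bottom path ($X=0$) meets the threshold $2nT^{(r+1)/r}$ on every instance, so the ``only if'' direction fails as written. Repairing it needs the coefficient $\tfrac{r+1}{r}T^{1/r}$ and a threshold lowered by $\tfrac1r T^{(r+1)/r}$. For $r=1$ this is your construction up to adding the same constant to both arcs of each bundle. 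Your normalization ($\gamma_a=0$ on the variable arc) has the added benefit that all fixed costs are nonnegative without assuming $a_i\le h$.

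You also correctly handle a point the paper leaves implicit. Because some arcs have zero variable cost, the value is an unattained infimum, so the comparison must be made with the $\inf$. The path lemma is phrased for an optimal solution, but its proof turns any feasible solution into a path solution that is no more expensive, and that is all you need.
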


\begin{proof}
We first show hardness for the case that $\bar{y} \equiv \infty$.
In this case, the Network Design Problem reduces to Problem~\eqref{quadratic_path_problem} as shown in~\Cref{subsec:shortest-path}.
    We show that there exists a polynomial-time reduction from the decision variant of Problem~\eqref{quadratic_path_problem} to the Partition problem.
    Let $a_1,\dots,a_n\in\N$ be an instance of the partition problem and define $T\define \frac{1}{2}\sum_{i\in [n]}a_i$.
    Let $G$ be a graph on $n+1$ vertices $s= v_1, \dots, v_{n+1} =t$.
    For $i\in[n]$, the nodes $v_i$ and~$v_{i+1}$ are connected via two parallel edges $e_i^t$ and~$e_i^b$ which we call the top and bottom edge of bundle~$i$, respectively. 
    
    We set the parameters for Problem~\eqref{quadratic_path_problem} as follows
    \begin{align*}
        c'(e_i^t) &= a_i,            &&&     c'(e_i^b) &= 0, \\
        \gamma(e_i^t) &= 2T^{\frac{r+1}{r}}-T^{\frac{1}{r}}a_i,     &&&     \gamma(e_i^b) &= 2T^{\frac{r+1}{r}},
    \end{align*}
    so that all parameters are non-negative since $2T^{\frac{r+1}{r}}=2T\,T^{\frac{1}{r}} \geq a_i T^{\frac{1}{r}}$.
    We show that there exists a subset $S\subseteq [n]$ such that $\sum_{i\in S}a_i=T$ if and only if there exists an $s$--$t$~path~$P$ in $G$ such that the objective of that path with respect to $c'$ and $\gamma$ is at most $2nT^{\frac{r+1}{r}}$.
    
    Note that every $s$--$t$ path in $G$ corresponds to a partition of the numbers $a_1,\dots,a_n$ into two subsets, i.e., the top edges and the bottom edges in the path, and vice versa.
    More precisely, consider a subset $S\subseteq [n]$ and the corresponding path~$P$ such that the top edge of bundle $i$ is in $P$ if and only if $i\in S$ and the bottom edge exactly when $i\notin S$. 
    Then,
    \begin{align*}
            \Bigg(\sum_{a\in P} c'_a \Bigg)^{\!\!\frac{r+1}{r}} +\sum_{a\in P}\gamma_a =  \Bigg(\sum_{i\in S} a_i \Bigg)^{\!\!\frac{r+1}{r}} +2nT^{\frac{r+1}{r}}-T^{\frac{1}{r}}\sum_{i\in S} a_i \leq 2nT^{\frac{r+1}{r}}
    \end{align*}
    if and only if $\sum_{i\in S} a_i =T$.     

To see that the Network Design Problem is also $\mathsf{NP}$-hard in the presence of bounds $\bar{y} \in \R_{\geq 0}^A$ on the conductance, we note that when $c \equiv 0$, it is without loss of generality to assume that in every solution we have $y_a = \bar{y}_a$ for each arc~$a$ with $x_a = 1$. The result then follows from the $\mathsf{NP}$-hardness of Problem~\eqref{IPCO_st} shown in \Cref{thm:NP_hardness_bounded}.
\end{proof}


\bibliography{literature}

\end{document}